\documentclass[11pt, letter]{article}
\usepackage[english]{babel}
\usepackage[utf8x]{inputenc}
\usepackage[T1]{fontenc}

\usepackage[top=3.5cm,bottom=3cm,left=3.5cm,right=3.5cm,marginparwidth=2cm]{geometry}

\usepackage{fullpage}

\usepackage{mathrsfs}
\usepackage{amsmath}
\usepackage{amssymb}
\usepackage{graphicx}
\usepackage[colorinlistoftodos]{todonotes}

\usepackage{amsthm}
\usepackage{hyperref}
\hypersetup{colorlinks=true, allcolors=blue}
\usepackage[capitalise]{cleveref}

\newtheorem{theorem}{Theorem}[section]
\newtheorem{lemma}[theorem]{Lemma}

\newtheorem{claim}[theorem]{Claim}
\newtheorem{observation}[theorem]{Observation}

\usepackage{fancyhdr,lastpage}
\usepackage{amsfonts}

\usepackage{color}

%TiKZ stuff

\usepackage{pgfplots}

\usepackage{mathrsfs}
\usetikzlibrary{arrows}

\usepackage{enumerate}

\author{Jack Dippel\thanks{Department of Mathematics \& Statistics, McGill University: jack.dippel@mail.mcgill.ca}\\
\and Adrian Vetta\thanks{School of Computer Science and Dept. of Mathematics \& Statistics, McGill University: adrian.vetta@mcgill.ca}}

\title{An Improved Bound for the\\[0.2cm] Tree Conjecture in Network Creation Games}

\date{}

\begin{document}

\pagestyle{plain}

\maketitle

\vspace{-.7cm}

\begin{abstract}
We study Nash equilibria in the network creation game of Fabrikant et al.~\cite{FLM03}.
In this game a vertex can buy an edge to another vertex for a cost of $\alpha$, and the objective of
each vertex is to minimize the sum of the costs of the edges it purchases plus the sum of the
distances to every other vertex in the resultant network.
A long-standing conjecture states that if $\alpha\ge n$ then every Nash equilibrium
in the game is a spanning tree. We prove the conjecture holds for any $\alpha>3n-3$. 
\end{abstract}

\section{Introduction}

In the {\em network creation game}, there is a set $V=\{1,2,\dots, n\}$ of agents (vertices).
A vertex can buy (build) an edge for a fixed cost of $\alpha$. Thus, 
a strategy for vertex $v$ is a set of incident edges $E_v$ that it buys.
A strategy profile $\mathcal{E}=\{E_1,E_2,\dots, E_n\}$ of the agents induces a graph $G = (V,E)$, where $E=E_1\cup E_2\cup \cdots  \cup E_n$.
The objective of each vertex is to minimize its total cost which is the sum of its building cost
and its connection cost. The {\em building cost} for vertex $v$ is $\alpha\cdot |E_v|$, the cost of all the edges it buys.
The {\em connection cost} is $\sum_{u: u\neq v} d_G(u,v)$, the sum of the distances in $G$ of $v$ to every other vertex.
That is
$$c_v(\mathcal{E}) = \alpha\cdot |E_v| + \sum_{u: u\neq v} d_G(u,v)$$

Our focus is on Nash equilibria of the game.
A  strategy profile $\mathcal{E}=\{E_1,E_2,\dots, E_n\}$ is a Nash equilibrium
if $E_v$ is a best response to $\mathcal{E}_{-v}=\{E_u\}_{u\neq v}$, for every vertex $v$.
That is, no agent can reduce its total cost by buying a different set of edges, given the strategies 
of the other agents are fixed.
By default, $d_G(u,v)=\infty$ if there is no path from $u$ to $v$ in the network. It immediately follows
that every Nash equilibrium $G$ is a connected graph.
The main result in this paper is that if $\alpha > 3n-3$ then every Nash equilibrium in the
network creation game is a spanning tree.

\subsection{Background}
Motivated by a desire to study network formation in the internet, Fabrikant et al.~\cite{FLM03} introduced the {\em network creation game} in 2003. 
They proved the {\em price of anarchy} for the game is $O(\sqrt{\alpha})$, and conjectured that it is a constant.
More specifically, they showed that any Nash equilibrium that is a spanning tree has total cost at most $5$ times that of the optimal
network (a star). They then conjectured that, for $\alpha$ greater than some constant, every Nash equilibrium 
is a spanning tree. This was the original {\em tree conjecture} for network creation games.

Subsequently, there has been a profusion of work on the tree conjecture. 
Albers et al. \cite{AEE14} proved that the tree conjecture holds for any $\alpha \geq 12n \log n$. 
However, they also showed that the conjecture is false in general. Moreover, Mamageishvili et al.~\cite{MMM15}
proved the conjecture is false for $\alpha < n$.
This has lead to a {\em revised tree conjecture}, namely that every Nash equilibrium is a tree if $\alpha\ge n$.
Indeed, Mihal\'ak and Schlegel~\cite{MS13} proved the conjecture holds when $\alpha\ge c\dot n$ for a large enough constant
$c$, specifically for $c> 273$. This constant has subsequently been improved in a series of works.
Mamageishvili et al.~\cite{MMM15} proved the revised tree conjecture for $\alpha \geq  65n$ and
\`Alvarez and Messegu\'e~\cite{AM17} improved the bound to $\alpha \geq  17n$. The current best bound for the 
revised tree conjecture was given by Bil\`o and Lenzner~\cite{BL19}. They proved every Nash equilibria in the 
network creation game is a spanning tree if $\alpha > 4n -13$.

As stated, our contribution is to further improve the bound. In particular, we prove the revised tree conjecture holds 
for any $\alpha > 3n-3$. Our proof exploits the concept of a {\em min-cycle}, introduced by Lenzner in~\cite{L14}.

We remark that numerous extensions and variations of the network creation game have also been studied; 
we refer the interested reader to~\cite{ADH13, BGL16, BGL14, CLM17, CL15, DHM12}.

%\subsection{Our Contribution}
%In this paper, we extend the range of $\alpha$ for which all NE are known to be trees to $\alpha > 3(n-1)$. 
%This is a step closer to 
%achieving the aforementioned Tree Conjecture. While this result has as a corollary that for $\alpha > 3(n-1)$ the PoA is bounded 
%by a constant, that fact has already been proven in \cite{AM19} by  \`Alvarez and Messegu\'e, who showed that 
%for $\alpha \geq n(1+\epsilon)$, the PoA is constant, regardless of whether there exist non-tree NE in this range. 
%We feel the Tree Conjecture would remain an interesting open problem even if the PoA was known to be constant 
%for all $\alpha \geq n$, especially because the PoA in tree Nash Equilbria is bounded by 5, much lower than the bound 
%given by in \cite{AM19} for $\alpha$  close to $n$.

\section{Preliminaries}
In this section, we present structural properties of Nash equilibria in the network creation game
and introduce the main strategic tools that will subsequently be used to quantitively analyze equilibria. 
We begin with some basic notation.
Given a subgraph $H$ of a graph $G$, we let $d_H(v,u)$ denote the distance between $v$ and $u$ in $H$.
In particular, $d(u,v)=d_G(u,v)$ is the distance from $v$ to $u$ in the whole graph. We let $D(v)=\sum_{u: u\neq v} d_G(u,v)$
be the sum of the distances from $v$ to every other vertex; that is, $D(v)$ is the connection cost for vertex $v$.
The shortest path tree rooted at $v$ is denoted $T_v$.
For any vertex $u$ we denote the subtree of $T_v$ rooted at $u$ by $T_v(u)$; similarly for any edge $e\in T_u$, 
we denote the subtree of $T_v$ below $e$ by $T_v(e)$.

Recall that $G$ is an undirected graph. 
Thus, once built, an edge $(u,v)$ of $G$ can be traversed in either direction.
However, it will sometimes be useful to also view $G$ as a directed graph. Specifically,
$(u,v)$ is oriented from $u$ to $v$ if the edge was bought by $u$
and is oriented from $v$ to $u$ if the edge was bought by $v$.

\subsection{Min-Cycles}
Of primary importance is the concept of a min-cycle, introduced by Lenzner~\cite{L14} and used 
subsequently by \`Alvarez and Messegu\'e~\cite{AM17} and by Bil\`o and Lenzner~\cite{BL19}. 
A cycle $C$ is a {\em min-cycle} if $d_C(u, v) = d_G(u,v)$, for every pair of vertices $u,v \in V(C)$.  
We will require the following two known min-cycle lemmas, short proofs of which we present for completeness.
The first lemma states then min-cycles arise in any graph. 
\begin{lemma}\label{mincycleedges}\cite{L14}	
If $e$ is not a cut-edge in $G$ then the smallest cycle containing $e$ is a min-cycle.
\end{lemma}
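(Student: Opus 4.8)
The plan is to argue by contradiction. Let $e=(x,y)$ be an edge that is not a cut-edge, and let $C$ be a smallest cycle through $e$. Suppose $C$ is not a min-cycle; then there are vertices $u,v\in V(C)$ with $d_G(u,v)<d_C(u,v)$. Among all such ``bad'' pairs on $C$, I would choose one minimizing $d_G(u,v)$, and fix a shortest $u$--$v$ path $P$ in $G$. The key structural observation is that $P$ cannot share any internal vertex with $C$: if $w$ were an internal vertex of $P$ lying on $C$, then either the $u$--$w$ or the $w$--$v$ sub-path of $P$ would already witness a shorter-than-$C$ distance between two vertices of $C$ (one of the sub-distances is strictly smaller than $d_G(u,v)$), contradicting minimality of the chosen pair — unless that sub-distance already equals the $C$-distance, in which case one can splice and reduce to a shorter bad pair as well. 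So $P$ meets $C$ exactly in its endpoints $u,v$.

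Now $C$ is split by $u$ and $v$ into two arcs $C_1$ and $C_2$, with $|C_1|+|C_2|=|C|$ and, say, $e\in C_1$ (the edge $e$ lies on one of the two arcs; relabel so it is $C_1$). Since $d_G(u,v)<d_C(u,v)=\min(|C_1|,|C_2|)$... more carefully, $d_C(u,v)=\min(|C_1|,|C_2|)$, and the hypothesis gives $|P|=d_G(u,v)<d_C(u,v)\le |C_2|$ and also $|P|<|C_1|$ is not immediate, so I would split into cases. Consider the closed walk $C_1\cup P$: it contains $e$ and has length $|C_1|+|P|$. If $|P|<|C_2|$ then $|C_1|+|P|<|C_1|+|C_2|=|C|$. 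Since $P$ is internally disjoint from $C$, the edge set $C_1\cup P$ is actually a cycle (no repeated vertices other than the closure), and it contains $e$, contradicting that $C$ is a smallest cycle through $e$. The remaining possibility is $|P|\ge |C_2|$; but then $d_C(u,v)=\min(|C_1|,|C_2|)\le |C_2|\le |P|=d_G(u,v)$, contradicting $d_G(u,v)<d_C(u,v)$. Either way we reach a contradiction, so $C$ is a min-cycle.

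The main obstacle I anticipate is making the ``$P$ is internally disjoint from $C$'' step fully rigorous, since a shortest path $P$ in $G$ could a priori re-enter $C$; handling this cleanly is what the minimal-counterexample choice of the pair $(u,v)$ is designed to do, and I would need to be careful that splicing $P$ with an arc of $C$ at an intermediate contact vertex genuinely produces a strictly shorter bad pair (or a shorter cycle through $e$) rather than going in circles. One clean way to sidestep subtleties: instead of a global shortest path, take $P$ to be a shortest $u$--$v$ path among those internally disjoint from $C$ — such a path exists because $C\setminus e$ together with the rest of the graph keeps $u,v$ connected off $C$ only if... actually the existence of \emph{some} $u$--$v$ path witnessing $d_G(u,v)$ that avoids the interior of $C$ follows from the minimal-pair argument, so I would state that as the crux lemma and then the cycle-length comparison is routine.
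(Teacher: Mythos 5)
Your proposal is correct and uses essentially the same argument as the paper: take a bad pair $(u,v)$ on the smallest cycle $C$ through $e$, let $P$ be a shortest $u$--$v$ path internally disjoint from $C$, and splice $P$ with the arc of $C$ containing $e$ to get a strictly shorter cycle through $e$. The paper dispatches the internal-disjointness point with a terse ``without loss of generality,'' whereas you spell out the minimal-counterexample justification (and also include a second case $|P|\ge |C_2|$, which is in fact vacuous since $|P|<d_C(u,v)\le |C_2|$ always holds); these are presentational differences, not a different route.
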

\begin{proof}
Consider the smallest cycle $C$ containing an edge $e$. Suppose for the sake of contradiction 
that there are two vertices $u,v\in C$ such $d_G(u,v) < d_C(u,v)$. Without loss of generality, 
suppose the shortest path between $u$ and $v$, labelled $P$, lies entirely outside $C$. Note that $C$ contains two paths
between $u$ and $v$. Let $Q$ be the path of $C$ from $u$ to $v$ that contains $e$. 
Then $P \cup Q$ is a cycle containing $e$ that is strictly than $C$, a contradiction.
\end{proof}
Notice that Lemma~\ref{mincycleedges} applies to any graph $G$ regardless of whether or not it is a Nash equilibrium.
The second lemma states that, given the orientations of the edges, every min-cycle in a Nash equilibria is directed.
To prove this, we require one definition and an observation. Given a min-cycle $C$, we say that edge $\bar{e}\in C$ is {\em opposite} 
$v\in C$ if it is at least as far from $v$ as
any other edge in $C$. Note that if $C$ has odd cardinality then there is a unique edge opposite to  $v\in C$; if
$C$ has even cardinality then there are two edges opposite to $v\in C$. Our observation is then:

\begin{observation}\label{obs:trees}
For any vertex $v$ in a min-cycle $C$, every edge of $C$ is in the shortest path tree $T_{v}$ except for 
one edge opposite $v$.
\end{observation}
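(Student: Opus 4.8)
The plan is to build a shortest path tree $T_v$ explicitly, using the defining feature of a min-cycle: along $C$, graph distances equal cycle distances. So the heavy lifting is essentially already done by the hypothesis, and what remains is to arrange the two ``arcs'' of $C$ leaving $v$ into a single tree.

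First I would fix coordinates. Write $k=|C|$ and label the vertices of $C$ cyclically as $v=v_0,v_1,\dots,v_{k-1}$, with edges $e_i=\{v_i,v_{i+1}\}$ (indices mod $k$). Along $C$ we have $d_C(v,v_i)=\min\{i,k-i\}$, and since $C$ is a min-cycle this equals $d_G(v,v_i)$. Put $m=\lfloor k/2\rfloor$. A one-line computation gives $d(v,e_i)=\min\{i,(k-1)-i\}$, which is maximized at $i=m$ (uniquely so when $k$ is odd, and jointly with $i=m-1$ when $k$ is even); thus $e_m$ is an edge opposite $v$, consistent with the definition preceding the observation.

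Next, delete $e_m$ from $C$ to obtain two paths sharing only the endpoint $v$: namely $P_1=v_0v_1\cdots v_m$ and $P_2=v_0v_{k-1}v_{k-2}\cdots v_{m+1}$. Tracking the indices shows that the $j$-th vertex of $P_1$ (resp.\ of $P_2$) lies at $C$-distance exactly $j$ from $v$, so by the min-cycle identity both $P_1$ and $P_2$ are shortest paths of $G$. Hence $S:=P_1\cup P_2=C\setminus\{e_m\}$ is a subtree of $G$ in which every vertex is joined to $v$ by a geodesic. A standard argument now extends $S$ to a full shortest path tree $T_v$: process the remaining vertices of $G$ in order of increasing distance from $v$, attaching each to an already-present neighbour one step closer. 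Since $T_v$ is acyclic it cannot contain the whole cycle $C$, so $T_v\cap C$ is exactly $C\setminus\{e_m\}$ — every edge of $C$ except the single opposite edge $e_m$.

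I do not expect a genuine obstacle here; the substantive input is merely the definition of a min-cycle. The only things to be careful about are bookkeeping: correctly identifying which edge is ``opposite'' $v$ together with the even/odd dichotomy in the number of such edges, and observing that because a shortest path tree need not be unique, the statement is most naturally read as asserting that \emph{some} shortest path tree $T_v$ has this structure (which is exactly what is needed); one cannot promote it to ``every $T_v$'' without extra hypotheses, since a min-cycle may have shortcuts through vertices off $C$.
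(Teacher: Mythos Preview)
Your argument is correct and follows essentially the same idea as the paper: remove one opposite edge, observe that the two remaining arcs of $C$ are shortest paths from $v$ by the min-cycle property, and conclude they lie in some shortest path tree $T_v$. The paper presents this via a short odd/even case split, while you give a unified coordinate argument with an explicit extension step and a careful remark on the ``some $T_v$'' reading; these are cosmetic differences only.
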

\begin{proof}
If $C$ has even cardinality then there is a unique vertex $u \in C$ farthest from $v$.
Tthe two paths in $C$ from $v$ to $u$ have the same length. 
Only one of these two paths is needed in $T_{v}$. In particular, either of the edges in $C$ incident $u$
(thus, opposite $v_i$) may be excluded from $T_{v}$. Since, $C$ is a min-cycle the remaining edges of $C$ are in 
some $T_{v}$.

If $C$ has odd cardinality then there are two vertices $u_1, u_2\in C$ furthest from $v$.
Then the edge $(u_1,u_2)$ is opposite $v$ and is not in $T_{v}$.
Again, since, $C$ is a min-cycle the remaining edges of $C$ are in 
some $T_{v}$.
\end{proof}

\begin{lemma}\label{directedcycles}\cite{BL19}
Let $\alpha > 2(n-1)$. Then every min-cycle in a Nash equilibrium $G$ is directed.
\end{lemma}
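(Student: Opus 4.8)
The plan is to argue by contradiction. Suppose $C=(v_1,v_2,\dots,v_k)$ is a min-cycle in a Nash equilibrium $G$ that is not directed. Then the ownership orientations of the edges of $C$ do not all agree around the cycle, so some vertex of $C$ has both of its incident $C$-edges directed away from it; relabelling, suppose this ``source'' is $v_1$, so that $v_1$ bought both $(v_1,v_2)$ and $(v_k,v_1)$. The goal is to produce a profitable deviation for $v_1$, contradicting the equilibrium condition.

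The deviation I would analyse first is for $v_1$ simply to delete $(v_1,v_2)$. Its building cost then drops by $\alpha>2(n-1)$, so it suffices to bound the increase in its connection cost by $\alpha$. Only vertices in the subtree $T_{v_1}(v_2)$ of the shortest-path tree are affected, and each of them can be re-routed around $C$: the edge $(v_k,v_1)$ is untouched and, by \Cref{mincycleedges}, $C$ is a min-cycle, so the detour $v_1,v_k,v_{k-1},\dots,v_2$ has length exactly $k-1$, whence every affected vertex moves farther by at most $k-2$ --- and, using \Cref{obs:trees} to see how far along $C$ the shortest-path branch of such a vertex runs before leaving $C$, by strictly less the further along $C$ that branch reaches. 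Combining this with the symmetric estimate for deleting $(v_k,v_1)$, and with the fact that $T_{v_1}(v_2)$ and $T_{v_1}(v_k)$ are vertex-disjoint and omit $v_1$ (hence $|T_{v_1}(v_2)|+|T_{v_1}(v_k)|\le n-1$), non-profitability of both deletions forces $\alpha\le\tfrac12(n-1)(k-2)$; with $\alpha>2(n-1)$ this already rules out every short min-cycle.

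For the remaining, longer min-cycles I would instead have $v_1$ perform a \emph{swap}: delete $(v_1,v_2)$ and simultaneously buy $(v_1,v_j)$ for a well-chosen $j$, keeping the building cost fixed. In a Nash equilibrium no such swap can strictly decrease $v_1$'s connection cost; expressing that change in terms of $j$ and of the sizes $|T_{v_1}(v_i)|$ of the subtrees hanging off $C$ (again via \Cref{obs:trees}), each of these inequalities forces, roughly, the subtree hanging directly off $v_2$ to dominate the rest of $T_{v_1}(v_2)$. Feeding this structural information back into the deletion estimate --- in place of the lossy uniform bound $k-2$ --- then contradicts $\alpha>2(n-1)$ together with $|T_{v_1}(v_2)|+|T_{v_1}(v_k)|\le n-1$.

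The step I expect to be the main obstacle is precisely this last accounting. A large subtree can hang directly off the far endpoint $v_2$ of the deleted edge, which makes plain deletion expensive and the crude per-vertex detour bound too weak for long cycles; the point is to choose the re-bought edge in the swap so that it points \emph{into} that heavy subtree, and then to play the deletion inequality, the family of swap inequalities, and the global bound $n-1$ against one another so that no placement of $v_1$'s two $C$-edges can satisfy all of $v_1$'s best-response conditions simultaneously. Making this quantitative is where the hypothesis $\alpha>2(n-1)$ is used to its full strength.
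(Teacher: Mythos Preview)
Your plan restricts attention to deviations by the single ``source'' vertex $v_1$ (deletions and single-edge swaps), and you explicitly leave the long-cycle accounting unfinished. That accounting is not just a technicality: with the moves you list, the best you extract is $\alpha\le\tfrac12(n-1)(k-2)$, which says nothing once $k\ge 7$, and the swap inequalities you sketch do not obviously close the gap, since the connection-cost change of a swap depends on the unknown quantities $D(v_j)-D(v_1)$ and on subtree sizes you have no independent control over. So as written there is a genuine gap.

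The paper's proof sidesteps all of this with two ideas you do not use. First, it has $v_1$ sell \emph{both} of its cycle edges and buy a single edge to the vertex $u$ on the far side of $C$ (an endpoint of an edge opposite $v_1$); this gains a clean $-\alpha$ in building cost and, since $e_1,e_2$ are not needed in $T_u$ by \cref{obs:trees}, costs at most $D(u)-D(v_1)+(n-1)$ in connection cost, yielding $D(v_1)\le D(u)+(n-1)-\alpha$. Second, it looks at a deviation by a \emph{different} agent, namely $u$: since the opposite edge $\bar e$ that $u$ bought is not needed in $T_{v_1}$, the swap $\bar e\to uv_1$ gives $D(u)\le D(v_1)+(n-1)$. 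Adding these two inequalities immediately gives $\alpha\le 2(n-1)$, with no dependence on $k$ and no subtree bookkeeping. The step you were anticipating as the ``main obstacle'' simply disappears once you allow $v_1$ to drop two edges at once and once you exploit a second player's best-response condition.
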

\begin{proof}
Let $C$ be a min-cycle that is {\em not} directed. Then there is a vertex $v$ that buys two 
edges of the cycle, say $e_1$ and $e_2$. Take an edge $\bar{e}$ opposite $v$ in $C$,
and let $u\in C$ be the vertex which buys $\bar{e}$.
By Observation~\ref{obs:trees}, a shortest path tree $T_{v}$ rooted at $v$ need not contain 
$\bar{e}$. Thus $u$ can sell $\bar{e}$ and buy the edge $uv$ without increasing 
the distance from $v$ to any other vertex. It follows, by the Nash equilibrium conditions, that 
\begin{equation}\label{eq:directed-cycle-1}
D(u) \leq D(v) + n-1
\end{equation}
On the other hand, the edges $e_1$ or $e_2$ are not needed in a shortest path from $u$ to any
other vertex $w\in C\setminus\{v\}$. It follows that $v$ can sell {\em both} $e_1$ and $e_2$ and instead buy the edge $vu$
without increasing the distance from $u$ to any other vertex. 
It follows, by the Nash equilibrium conditions, that 
\begin{equation}\label{eq:directed-cycle-2}
D(v) \leq D(u) + n-1 - \alpha
\end{equation}
Together (\ref{eq:directed-cycle-1}) and  (\ref{eq:directed-cycle-2}) give
$$D(v) \ \leq\  D(v) + 2(n-1) - \alpha \ <\   D(v)$$
This contradiction implies that $C$ must be a directed cycle.
\end{proof}

Given \cref{directedcycles}, we may use the following specialized notation for min-cycles.
We will label the vertices of a min-cycle as $C=\{v_0,v_1,\dots, v_k=v_0\}$, where $k=|C|$.
As the min-cycle is directed, we may also assume $v_i$ buys the edge $e_i=v_iv_{i+1}$, for 
each $0 \leq i \leq |C|-1$. Recall there are two paths on $C$ between any pair of vertices $v_i$ and $v_j$. 
The path that follows the orientation of the edges in the directed cycle is called the {\em clockwise path} from $v_i$ to $v_j$ and 
that path that goes against the orientation of the cycle is called the {\em anti-clockwise path}. 

To conclude our discussion on min-cycles, we present two trivial, but very useful, observations.
First we remark that every min-cycle is {\em chordless}. 
\begin{observation}\label{obs:chordless}
Every min-cycle $C$ is chordless.
\end{observation}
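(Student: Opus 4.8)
The statement to prove is that every min-cycle $C$ is chordless.

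The plan is to argue directly from the defining property of a min-cycle, namely that $d_C(u,v) = d_G(u,v)$ for every pair $u,v \in V(C)$. Suppose for contradiction that $C$ has a chord $f = v_iv_j$, where $v_i$ and $v_j$ are not consecutive on $C$. First I would observe that since $v_i, v_j \in V(C)$, the min-cycle property forces $d_C(v_i,v_j) = d_G(v_i,v_j)$. But the chord $f$ is an edge of $G$, so $d_G(v_i,v_j) \le 1$. On the other hand, because $v_i$ and $v_j$ are non-consecutive on the cycle $C$, both of the two arcs of $C$ between them have length at least two, so $d_C(v_i,v_j) \ge 2$. This contradicts $d_C(v_i,v_j) = d_G(v_i,v_j) \le 1$, and hence no such chord can exist.

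There is essentially no hard part here; the only thing to be careful about is the degenerate edge cases. If $C$ is a triangle (i.e. $|C| = 3$), then every pair of vertices of $C$ is consecutive, so there is nothing to check — a chord would by definition join two non-adjacent vertices, and there are none. I would also note that a chord is required to be an edge not already lying on $C$, so the case where $f$ coincides with a cycle edge does not arise. With these remarks, the single-line distance comparison above completes the argument, and it applies to any min-cycle regardless of whether $G$ is a Nash equilibrium (just as in \cref{mincycleedges}). I would present it as a two- or three-sentence proof.
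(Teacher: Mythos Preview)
Your proof is correct and is exactly the argument implicit in the paper's one-line proof (``This follows immediately from the definition of a min-cycle''). You have simply spelled out the distance comparison $d_C(v_i,v_j)\ge 2 > 1 \ge d_G(v_i,v_j)$ that the paper leaves to the reader.
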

\begin{proof}
This follows immediately from the definition of a min-cycle.
\end{proof}
\cref{obs:chordless} implies that any vertex $v_i\in C$ has exactly two neighbours 
on $C$, namely $v_{i-1}$ and $v_{i+1}$. Any other neighbour of $v_i$ must lie outside of $C$.
For the second observation, define $k_{\max}$ to be the maximum length of a min-cycle in a biconnected 
component $H$.\footnote{A {\em biconnected 
component} $H\subseteq G$ is
a maximal set such that there are two vertex-disjoint paths between any pair of vertices in $H$.}

\begin{observation}\label{obs:increase}
Let $e=(u,v)$ be an edge in a biconnected component $H$ of $G$. Then 
$$d_{G\setminus e}(u,v) \ \le\ d_{G}(u,v)+ k_{\max}-2$$
\end{observation}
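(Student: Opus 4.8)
The plan is to bound the detour needed to route from $u$ to $v$ once the edge $e=(u,v)$ is deleted. Since $e$ lies in a biconnected component $H$, it is not a cut-edge of $G$, so by \cref{mincycleedges} the smallest cycle $C$ containing $e$ is a min-cycle, and in particular $C\subseteq H$, so $|C|\le k_{\max}$. Write $C=\{v_0,v_1,\dots,v_k=v_0\}$ with $e=v_0v_1$ (say $u=v_0$, $v=v_1$); then $C-e$ is a $u$–$v$ path of length $|C|-1\le k_{\max}-1$ that avoids $e$, giving the weak bound $d_{G\setminus e}(u,v)\le k_{\max}-1$. To get the claimed $d_G(u,v)+k_{\max}-2$ I will be a little more careful: in $G$ we have $d_G(u,v)=1$ (the edge $e$ itself), but I want to compare against the possibility that $d_G(u,v)$ could be used to shortcut part of $C-e$ — actually the cleaner route is to note $d_{G\setminus e}(u,v)\le |C|-1 = (|C|-2)+1 = (|C|-2)+d_G(u,v) \le k_{\max}-2 + d_G(u,v)$, using $d_G(u,v)=1$.

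So the key steps, in order, are: (1) observe $e$ is not a cut-edge since it lies in a biconnected component; (2) apply \cref{mincycleedges} to get that the smallest cycle $C$ through $e$ is a min-cycle; (3) note $C$ lies in the same biconnected component $H$, hence $|C|\le k_{\max}$ by definition of $k_{\max}$; (4) observe that deleting $e$ still leaves the path $C-e$ connecting $u$ and $v$, of length $|C|-1$; (5) conclude $d_{G\setminus e}(u,v)\le |C|-1 \le k_{\max}-1 = (k_{\max}-2)+1 = (k_{\max}-2)+d_G(u,v)$, where the last equality uses $d_G(u,v)=1$ because $e\in E(G)$.

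I don't expect a serious obstacle here — the statement really is an "trivial but useful" observation as the text advertises. The only point requiring mild care is the bookkeeping of the $-2$ versus $-1$: one must remember that $d_G(u,v)=1$ for an actual edge and fold that $+1$ into the bound, rather than proving the slightly weaker $d_{G\setminus e}(u,v)\le d_G(u,v)+k_{\max}-1$. A secondary point to state cleanly is that $C$ genuinely lies inside $H$: any cycle through an edge of a biconnected component is contained in that biconnected component (two vertices on a cycle have two vertex-disjoint paths between them along the cycle), so $k_{\max}$, defined as the max min-cycle length in $H$, does bound $|C|$.

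One subtlety worth flagging: if $G$ itself is just the single edge $e$ on two vertices then $e$ is a cut-edge and the hypothesis "$e$ lies in a biconnected component $H$" (with $H$ nontrivial, i.e. containing a cycle) fails, so this degenerate case is excluded and there is always a cycle $C$ through $e$ to invoke. With that, the proof is a two-line deduction from \cref{mincycleedges}.
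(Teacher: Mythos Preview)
Your proposal is correct and follows essentially the same argument as the paper: invoke \cref{mincycleedges} to place $e$ in a min-cycle $C\subseteq H$ of length at most $k_{\max}$, then use the path $C\setminus e$ of length $|C|-1$ together with $d_G(u,v)=1$ to conclude the distance increases by at most $k_{\max}-2$. The only difference is cosmetic---you are slightly more explicit about why $C\subseteq H$ and about the $d_G(u,v)=1$ bookkeeping.
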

\begin{proof}
Since $H$ is biconnected, $e$ lies in a cycle. Thus, by \cref{mincycleedges}, $e$ lies in a min-cycle $C$.
Because there are two paths between $u$ and $v$ on $C$ (the clockwise and anti-clockwise paths),
the removal of $e=(u,v)$ increases the distance between $u$ and $v$ by at most $(|C|-1)-1\le k_{\max}-2$.
The observation follows.
\end{proof}
The applicability of Observation~\ref{obs:increase} is evident. 
It can be used to upper bound the increase in connection costs arising from the sale of an edge.
At a Nash equilibrium, this upper bound must be at least $\alpha$, the amount saved in
construction costs by the sale.

\subsection{Basic Strategic Options for a Vertex}
To prove a Nash equilibrium $G$ is a tree we must show $G$ contains no cycles.
In particular, it suffices to show $G$ contains no biconnected components.
This leads to the following basic idea used in our proof.
We take a vertex $v$ in a biconnected component $H$ of a Nash equilibrium $C$.
By the Nash equilibrium conditions, any strategy change by $v$ cannot decrease its total cost.
Thus the resultant change in cost is non-negative. In particular, each potential strategy change induces an
{\em inequality constraint}. Further, a collection of potential strategy changes induces a set of inequalities.
By case analysis, we will show that it is always possible to find a collection of strategy changes 
for which linear combinations of the corresponding inequality constraints 
induce the constraint $\alpha\le 3n-3$. It follows that if $\alpha> 3n-3$ then $G$ is not a Nash equilibrium.
Consequently, every Nash equilibrium for $\alpha> 3n-3$ has no biconnected components and is, thus, 
a spanning tree.

In fact, nearly every strategy change we use in the paper will take one of three general forms.
Therefore, it will be helpful to present these three general forms here.

\begin{claim}[Strategy I]\label{cl:strategy-1}
Let $H$ be a biconnected component in a Nash equilibrium $G$ with $u,v\in H$.
Assume $v$ buys edge $vw \in H$. 
If $w \not\in T_u(v)$, for some $T_u(v)$, then the strategy change where $v$ swaps $vw$ for $vu$
induces the inequality
\begin{equation}\label{eq:strat1}
	0 \ \leq\  D(u) + (n-1) - D(v) - (d(u,v)+1)\cdot|T_u(v)|
\end{equation}
\end{claim}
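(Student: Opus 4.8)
The plan is to compute the change in vertex $v$'s total cost when it executes the swap, and then invoke the Nash equilibrium condition to conclude this change is non-negative, yielding \eqref{eq:strat1}. Write $\mathcal{E}'$ for the strategy profile obtained from $\mathcal{E}$ by having $v$ sell $vw$ and buy $vu$ instead, and let $G'$ be the induced graph. Since $v$ buys one edge before and one edge after, the building cost of $v$ is unchanged, so $c_v(\mathcal{E}') - c_v(\mathcal{E}) = D'(v) - D(v)$, where $D'(v)=\sum_{x\neq v} d_{G'}(v,x)$. The Nash condition gives $c_v(\mathcal{E}') \ge c_v(\mathcal{E})$, hence $D'(v) \ge D(v)$, so it remains to produce a good \emph{upper} bound on $D'(v)$.

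First I would bound the distances from $v$ in $G'$ by exhibiting, for each other vertex $x$, a short walk from $v$ to $x$ in $G'$. The new edge $vu$ lets $v$ reach $u$ in one step, and from $u$ all of $G\setminus vw$ is still available; the key point is that deleting $vw$ from $T_u$ does not disconnect anything, since $w\notin T_u(v)$ means the edge $vw$ is not the tree-edge of $T_u$ hanging below $v$ — indeed, because $v$ buys $vw$, if $vw$ were in $T_u$ it would be precisely the edge entering $v$, forcing $w$ to be $v$'s parent and $v\in T_u(w)$, i.e.\ $w\notin T_u(v)$ is exactly the statement that $vw\notin T_u$. Wait — more carefully, $vw\in T_u$ could also have $v$ as the parent of $w$; in that case $w\in T_u(v)$, contradicting the hypothesis. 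Either way, $w\notin T_u(v)$ forces $vw\notin T_u$, so $T_u$ is a spanning tree of $G'$ as well. Therefore for every $x$, $d_{G'}(v,x) \le 1 + d_{T_u}(u,x) = 1 + d(u,x)$, giving the crude bound $D'(v) \le (n-1) + D(u)$.

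Next I would improve this for the vertices $x$ lying in $T_u(v)$: for such $x$, the tree path in $T_u$ from $u$ to $x$ already passes through $v$, so $d(u,x) = d(u,v) + d_{T_u}(v,x)$, and routing via the new edge $vu$ and then down is wasteful. Instead, for $x \in T_u(v)$ we can use the path inside $T_u(v)$ directly from $v$, which has length $d_{T_u}(v,x) = d(u,x) - d(u,v)$ in $G'$ as well (this subtree is untouched by the swap, assuming $vw$ is not inside $T_u(v)$, which holds since $w\notin T_u(v)$). Comparing with the crude estimate $1 + d(u,x)$, each vertex of $T_u(v)$ contributes a saving of at least $1 + d(u,x) - (d(u,x) - d(u,v)) = d(u,v) + 1$. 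Summing the crude bound over all $x\neq v$ and subtracting the savings over the $|T_u(v)|$ vertices in the subtree yields
\begin{equation*}
D'(v) \ \le\ D(u) + (n-1) - (d(u,v)+1)\cdot |T_u(v)|.
\end{equation*}
Combining with $D'(v) \ge D(v)$ gives exactly \eqref{eq:strat1}.

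I expect the main subtlety — rather than a hard obstacle — to be the bookkeeping that justifies $vw \notin T_u$ and that the subtree $T_u(v)$ survives the swap intact, which is where the hypothesis $w \notin T_u(v)$ is used in an essential way; everything else is a direct distance comparison plus the equilibrium inequality. One should also check the edge case $u = v$ is excluded (it is, since the swap would be vacuous and $|T_u(v)|$ would be $n$), and note that the bound is stated for \emph{some} choice of shortest-path tree $T_u(v)$, so we are free to pick the $T_u$ that realizes the hypothesis.
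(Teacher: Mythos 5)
Your overall strategy — bound the new connection cost $D'(v)$ by routing through $u$, improve the bound for vertices in $T_u(v)$, and invoke the Nash condition — is exactly the paper's approach. However, there is a genuine gap in the reasoning that justifies the crude bound, plus some bookkeeping slips that happen to cancel.

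The gap: you conclude that ``$w\notin T_u(v)$ forces $vw\notin T_u$,'' but this is false. You correctly rule out the sub-case where $v$ is the parent of $w$ (that would put $w\in T_u(v)$), but you never address the remaining sub-case: $vw\in T_u$ with $w$ the \emph{parent} of $v$. In that situation $w\notin T_u(v)$ and yet $vw\in T_u$, so deleting $vw$ \emph{does} disconnect $T_u$, and your statement ``$T_u$ is a spanning tree of $G'$ as well'' fails. (The earlier attempt, ``because $v$ buys $vw$, if $vw$ were in $T_u$ it would be precisely the edge entering $v$,'' does not help either: who buys the edge has no bearing on which endpoint is the parent in $T_u$.) The paper handles exactly this missing case: it observes that when $vw$ is the last edge of the path $P$ from $u$ to $v$ in $T_u$, deleting $vw$ only severs $T_u(v)$, and the new edge $vu$ reconnects it — so $u$'s distances to $T_u(v)$ even \emph{decrease}, and the crude bound $d_{G'}(v,x)\le 1+d(u,x)$ survives. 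Your downstream conclusion is salvageable, but the justification as written is incorrect and this sub-case must be addressed.

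Two smaller bookkeeping points worth noting. First, the crude bound should read $D'(v)\le (n-1)+\sum_{x\neq v}d(u,x) = (n-1)+D(u)-d(u,v)$; you dropped the $-d(u,v)$. Second, the per-vertex saving of $d(u,v)+1$ applies to the vertices of $T_u(v)$ that actually appear in the sum $\sum_{x\neq v}$, i.e.\ to $|T_u(v)|-1$ vertices, not $|T_u(v)|$. As stated, your derivation does not actually produce the displayed inequality; the two slips happen to cancel to within a constant, which is why you arrive at the same expression as the claim, but the steps in between are not individually correct and should be tightened.
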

\begin{proof}
This strategy change is illustrated in Figure~\ref{fig:I}.
The basic idea is that by adding the edge $vu$ the vertex $v$ can now use the shortest path tree $T_u$. 
The distance from $v$ to any vertex $x$ would then be at most $1+d(u,x)$, inducing 
a communication cost of at most $D(u)+(n-1)$ instead of its current communication cost of $D(u)$.

By selling $vw$ the construction cost for vertex $v$ does not change. But there is a problem. What if
$vw$ is used in the shortest path tree $T_u$? 
Since $w\notin T_u(v)$, the only way this can occur is if $wv$ is the final edge on the path $P$ from $u$ to $v$ in $T_u$.
That is, the edge $wv$ is only needed by $u$ in order to connect to the vertices in $T_u(v)$. But it can now connect 
to those vertices simply by using the edge $uv$ instead of the path $P$.
In particular, by swapping $vw$ for $vu$: (i) the distance form $u$ to any other vertex $x$ cannot increase,
and (ii) the distance from $u$ to any vertex $x\in T_u(v)$ decreases by $|P|-1=d(u,v)-1$.
Moreover, $v$ does not need to use the edge $vu$ in order to connect to the vertices in $T_u(v)$.
Thus the total change in cost is at most
\begin{alignat*}{3}
\sum_{x\notin T_u(v)} \big(1+d(u,v)\big) &+ \sum_{x\in T_u(v)} \big(d(u,x)-|P|\big) -D(v)\\
&= D(u) + \big(n-|T_u(v)|\big)  - |P|\cdot |T_u(v)| -   D(v)\\
&= D(u) + \big(n-1\big)  - (|P|+1)\cdot |T_u(v)| -   D(v)\\
&= D(u) + \big(n-1\big)  - \big(d(u,v)+1\big)\cdot |T_u(v)| -   D(v)
\end{alignat*}
But, by the Nash equilibrium conditions, this change in cost must be non-negative.
The claimed inequality (\ref{eq:strat1}) then holds.
\end{proof}

\begin{figure}[t]
 \begin{minipage}[t]{0.32\textwidth}		
		\centering
	\scalebox{1}{
		\begin{tikzpicture}
		 \draw [green, ultra thick] (0,2) -- (0,0);
			\filldraw (0,2) circle (0.05);
			\filldraw (0,0) circle (0.05);
			\filldraw (.5,0.25) circle (0.05);
			\draw[thick] (-1.75,-1.85) -- (1.75,-1.85)-- (1.75, 0) -- (0,2)-- (-1.75, 0) -- (-1.75,-1.85);
			\draw[dashed, thick, red] (-1.95,-1.15) -- (1.95,-1.15)-- (1.95, 0) -- (0,2.2)-- (-1.95, 0) -- (-1.95,-1.15);
			\draw[thick, blue] (.75,-1.75) -- (-.75,-1.75) -- (0,0)--(.75,-1.75) ;
			\draw[thick, dashed] (0,0) -- (.5,.25);
			\draw(.5,0.25) -- (.25,0.6) -- (.65,.95)-- (.15, 1.3)-- (0.2,1.65) --(0,2);
			\begin{scriptsize}
				\draw[color=red] (-2.15,0) node {\footnotesize $H$};
				\draw (-.8,.5) node {\footnotesize $T_u$};
				\draw(-.15, 0) node {$v$};
				\draw (0.2,2.15) node {$u$};
				\draw (0.7,0.25) node {$w$};
				\draw (0.3,.9) node {$P$};
				\draw[color=blue] (0,-1.5) node {\footnotesize $T_u(v)$};
			\end{scriptsize}
		\end{tikzpicture}}
	\caption{Strategy I}\label{fig:I}
	\end{minipage}
 \begin{minipage}[t]{0.32\textwidth}
 	\centering	
\scalebox{1}{
	\begin{tikzpicture}
		 \draw [green, ultra thick] (0,2) -- (0,0.5);
			\filldraw (0,2) circle (0.05);
			\filldraw (0,.5) circle (0.05);
			\filldraw (.5,0.75) circle (0.05);
			\draw[thick] (-1.75,-1.85) -- (1.75,-1.85)-- (1.75, 0) -- (0,2)-- (-1.75, 0) -- (-1.75,-1.85);
			\draw[dashed, thick, red] (-1.95,-1.15) -- (1.95,-1.15)-- (1.95, 0) -- (0,2.2)-- (-1.95, 0) -- (-1.95,-1.15);
			\draw[thick, blue] (.75,-1.75) -- (-.75,-1.75) -- (0,-.5)--(.75,-1.75) ;
			\draw[thick, dashed] (0,0.5) -- (.5,.75);
			\draw(.5,.75) -- (.65,.95)-- (.15, 1.3)-- (0.2,1.65) --(0,2);
			\begin{scriptsize}
				\draw[color=red] (-2.15,0) node {\footnotesize $H$};
				\draw (-.8,.5) node {\footnotesize $T_u$};
				\draw(-.15, 0.65) node {$v$};
				\draw (0.2,2.15) node {$u$};
				\draw (0.7,0.75) node {$w$};
				\draw (0.3,.95) node {$P$};
				\draw[color=blue] (0,-1.5) node {$T_u(e)$};
				\draw (-.65,0) node {$C$};
				\draw (.15,0) node {$e$};
			\end{scriptsize}
	\filldraw (0,-.5) circle (0.05);
		\draw[thick] (0,0.5) arc (90:270:.5);
		\draw[dashed, thick] (0,0.5) -- (0,-.5);
\end{tikzpicture}}
	\caption{Strategy II}\label{fig:II}
\end{minipage}
 \begin{minipage}[t]{0.32\textwidth}
		\centering
	\scalebox{1}{
		\begin{tikzpicture}
		 \draw [green, ultra thick] (0,2) -- (0,0);
			\filldraw (0,2) circle (0.05);
			\filldraw (0,0) circle (0.05);
			\filldraw (.5,0.25) circle (0.05);
			\draw[thick] (-1.75,-1.85) -- (1.75,-1.85)-- (1.75, 0) -- (0,2)-- (-1.75, 0) -- (-1.75,-1.85);
			\draw[dashed, thick, red] (-1.95,-1.15) -- (1.95,-1.15)-- (1.95, 0) -- (0,2.2)-- (-1.95, 0) -- (-1.95,-1.15);
			\draw[thick, blue] (0,0) -- (1,-1.75) -- (-1,-1.75) -- (-.25,-.5) ;
			\draw[thick, dashed] (0,0) -- (.5,.25);
			\draw(.5,0.25) -- (.25,0.6) -- (.65,.95)-- (.15, 1.3)-- (0.2,1.65) --(0,2);
			\begin{scriptsize}
				\draw[color=red] (-2.15,0) node {\footnotesize $H$};
				\draw (-.8,.5) node {\footnotesize $T_u$};
				\draw(-.15, 0) node {$v$};
				\draw (0.2,2.15) node {$u$};
				\draw (0.7,0.25) node {$z$};
				\draw (-0.05, -.5) node {$x$};
				\draw (-0.8, -1) node {$w$};
				\draw (-.6, -1.5) node {$c$};
				\draw (-0.05, -1) node {$y$};
				\draw[color=blue] (-1.3,-1.5) node {$T_u(v)$};
					\draw (0.15,-1.35) node {\tiny $T_u(x)$};
			\end{scriptsize}
			\filldraw (0.1,-.5) circle (0.05);
			\filldraw (-.25,-.5) circle (0.05);
			\filldraw (-.4,-1.5) circle (0.05);
			\filldraw (.1,-.95) circle (0.05);
			\filldraw (-.65,-1.15) circle (0.05);
			\draw (0.1,-.5) -- (.6,-1.5) -- (-.4,-1.5)-- (0.1,-.5);
			\draw (.1,-.95) -- (0.5,.25);
			\draw [dashed, thick] (-.25,-.5) -- (0,0);
			\draw [dashed, thick] (.1,-.5) -- (0,0);
	\end{tikzpicture}}
	\caption{Strategy III}\label{fig:III}
\end{minipage}
\end{figure}

\begin{claim}[Strategy II]\label{cl:strategy-2}
Let $H$ be a biconnected component in a Nash equilibrium $G$ with $u, v\in H$.
Assume $v$ buys edges $vw, e\in H$. If $w \not\in T_u(v)$, for some $T_u(v)$, then the strategy change where $v$ swaps both
$vw$ and $e$ for $vu$ induces the inequality
\begin{equation}\label{eq:strat2}
	0 \ \leq\  D(u) + (n-1) - D(v) - (d(u,v)+1)\cdot|T_u(v)| +(k_{\max}-2)\cdot |T_u(e)| - \alpha
\end{equation}
\end{claim}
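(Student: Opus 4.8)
The plan is to regard this strategy change as the swap of \cref{cl:strategy-1} (sell $vw$, buy $vu$) with the extra sale of $e$ layered on top. Let $G''$ be the resulting network, namely $G$ with $vw$ and $e$ deleted and $vu$ added; we may assume $vu\notin E(G)$, since if $vu$ is already present then the building cost of $v$ drops by even more and (\ref{eq:strat2}) only becomes easier. As $v$ relinquishes two edges it had bought and acquires one, its building cost falls by exactly $\alpha$, so the whole claim is equivalent to the connection cost estimate
\[
D_{G''}(v)\ \le\ D(u)+(n-1)-(d(u,v)+1)\cdot|T_u(v)|+(k_{\max}-2)\cdot|T_u(e)|,
\]
from which the Nash inequality $c_v(G'')\ge c_v(\mathcal{E})$, i.e.\ $D_{G''}(v)\ge D(v)+\alpha$, rearranges at once into (\ref{eq:strat2}).

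To prove this estimate I would re-use the routing scheme from the proof of \cref{cl:strategy-1}, keeping track only of what the deletion of the extra edge $e$ spoils. Fix a shortest path tree $T_u$ with $w\notin T_u(v)$. Exactly as in \cref{cl:strategy-1}, every vertex $x\notin T_u(v)$ is served from $v$ by the new edge $vu$ followed by a shortest $u$--$x$ path, at cost at most $1+d(u,x)$: both $vw$ and $e$ are incident to $v$, while the $T_u$-path from $u$ to any such $x$ avoids $v$ and therefore survives in $G''$. And every vertex $x\in T_u(v)\setminus T_u(e)$ is still served from $v$ through the intact portion of $T_u(v)$ at cost $d(u,x)-d(u,v)$. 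These are precisely the contributions bounded in \cref{cl:strategy-1}, and they total at most $D(u)+(n-1)-(d(u,v)+1)|T_u(v)|$. The only vertices not yet accounted for are those of $T_u(e)$, whose $T_u$-route from $v$ passed through the now-missing edge $e$.

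It remains to re-route $v$ to $T_u(e)$, and this is the one delicate point. Write $e=vv'$. If $v'$ is not a child of $v$ in $T_u$ — that is, if $e$ is the parent edge of $v$, or $e\notin T_u$ — then none of the routes above uses $e$ at all (its only conceivable use was to carry $v$ to its parent, now subsumed by $vu$), so $D_{G''}(v)$ is already at most the \cref{cl:strategy-1} value and (\ref{eq:strat2}) follows. Otherwise $v'$ is a child of $v$, so $T_u(e)=T_u(v')$ and the $T_u$-distance from $v$ to any $x\in T_u(e)$ is $1+d_{T_u}(v',x)$. Since $e$ lies in the biconnected component $H$ it is not a cut-edge, so by \cref{mincycleedges} the smallest cycle $C$ through $e$ is a min-cycle; it lies in $H$, so $|C|\le k_{\max}$, and by \cref{directedcycles} — applicable since $\alpha>3n-3>2(n-1)$ — it is directed. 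In a directed cycle each vertex buys exactly one of its two incident edges, and $v$ buys $e$, so $v$ buys no other edge of $C$; in particular $vw$, which $v$ does buy, is not on $C$. Hence the path $C\setminus e$ joins $v$ to $v'$, has length $|C|-1\le k_{\max}-1$, and lies entirely in $G''$ because it avoids both $e$ and $vw$. Concatenating it with the untouched subtree $T_u(v')$ gives, for every $x\in T_u(e)$,
\[
d_{G''}(v,x)\ \le\ (k_{\max}-1)+d_{T_u}(v',x)\ =\ (k_{\max}-2)+d_{T_u}(v,x),
\]
so each of the $|T_u(e)|$ vertices of $T_u(e)$ is served at a cost at most $k_{\max}-2$ above its \cref{cl:strategy-1} value. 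Summing over all vertices gives the displayed estimate on $D_{G''}(v)$, hence (\ref{eq:strat2}). The step I expect to be the main obstacle is exactly this re-routing: one has to be certain that the detour around $e$ does not run through the simultaneously-deleted edge $vw$, and it is precisely here — and nowhere else — that directedness of min-cycles (\cref{directedcycles}) is needed; without it $vw$ could be the second edge of $C$ at $v$, and the detour would collapse.
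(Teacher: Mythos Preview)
Your proof is correct and follows essentially the same approach as the paper: layer the sale of $e$ on top of the Strategy~I analysis, invoke \cref{mincycleedges} to place $e$ in a min-cycle $C$, use \cref{directedcycles} to deduce $vw\notin C$, and bound the re-routing cost around $C\setminus e$ by $k_{\max}-2$ per vertex of $T_u(e)$. The paper compresses the last two steps into a citation of \cref{obs:increase} after noting $vw\notin C$; you instead spell out the detour explicitly and are more careful about the case split on whether $v'$ is a child of $v$ in $T_u$, but the substance is identical.
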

\begin{proof}
This strategy change is illustrated in Figure~\ref{fig:II}.
The first four terms of~(\ref{eq:strat2}) follow as in the argument of \cref{cl:strategy-1}.
What remains to be considered is the additional effect of selling edge $e$.
The $-\alpha$ term arises in the construction cost because $v$ is now buying one less edge than before.
Since $e$ is in a biconnected component $H$, it is in a min-cycle $C$ by \cref{mincycleedges}.
By~\cref{directedcycles},  $C$ is directed. Hence, $vw\notin C$.
The term $(k_{\max}-2)\cdot |T_u(e)|$ then follows by applying \cref{obs:increase}.
\end{proof}

Note that any vertex in $T_u(e)$ is also in $T_u(v)$. Therefore, instead of (\ref{eq:strat2}), we will often use the 
following simplified bound:
\begin{equation}\label{eq:strat3}
	0 \ \leq\  D(u) + n-1 - D(v) + (k_{\max}-d(u,v)-3)\cdot|T_u(v)| - \alpha
\end{equation}

\begin{claim}[Strategy III]\label{cl:strategy-3}
Let $H$ be a biconnected component in a Nash equilibrium $G$ with $u, v\in H$.
Assume $v$ buys a quantity $\ell$ of edges of $H$. Let $w$ be the vertex in any $T_u(v)\cap H$ furthest from $u$.
Then the strategy change where $v$ swaps all its edges for $vu$ induces the inequality
\begin{equation}\label{eq:strat4}
	0 \ \leq\  D(u) + (n-1) - D(v) + 2d(v,w)\cdot |T_u(v)| - (\ell-1)\cdot \alpha
\end{equation}
\end{claim}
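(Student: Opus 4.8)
The plan is to mimic the accounting used in Claims~\ref{cl:strategy-1} and~\ref{cl:strategy-2}, but now $v$ sells \emph{all} $\ell$ of its edges and buys the single edge $vu$. The construction cost changes by $-(\ell-1)\alpha$, which accounts for the last term of~(\ref{eq:strat4}). For the connection cost of $v$, exactly as in Strategy~I, once $v$ owns $vu$ it can reach every vertex $x$ along a path of length at most $1+d(u,x)$, so its connection cost becomes at most $D(u)+(n-1)$; this gives the first three terms. The only thing left to bound is the damage done to the other vertices — that is, to $u$'s connection cost — by deleting all the edges $v$ used to own.

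The key step is to control how much distances in $G$ can increase for vertices that relied on edges bought by $v$. After $v$ swaps everything for $vu$, the only vertices whose distance from $u$ can grow are those in $T_u(v)$ (every shortest path from $u$ to a vertex outside $T_u(v)$ in the old tree used none of $v$'s edges, and $u$ can still route into $T_u(v)$ directly via $uv$). For a vertex $y\in T_u(v)$, I would argue that in the new graph $u$ can reach $y$ by going to $v$ (distance $1$), then to the designated vertex $w$ — the vertex of $T_u(v)\cap H$ farthest from $u$ — and then down into $T_u(v)$ to $y$. The point of choosing $w$ this way is that $H$ is biconnected, so the portion of $T_u(v)$ inside $H$ stays connected to the rest of the biconnected component even after $v$'s edges are removed; routing through $w$ re-enters the deleted subtree from a direction that does not use any edge bought by $v$. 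A short calculation then shows each such $y$ is at distance at most $d(u,v)+1 + 2d(v,w) + (\text{old distance from } v \text{ to } y \text{ inside } T_u(v))$, and comparing with the old distance $d(u,y)$ the net increase per vertex of $T_u(v)$ is at most $2d(v,w)$, giving the term $2d(v,w)\cdot|T_u(v)|$.

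Summing the change in $v$'s connection cost, the change in the connection costs of all other vertices (bounded by $2d(v,w)\cdot|T_u(v)|$ in total), and the change $-(\ell-1)\alpha$ in construction cost, the total change in $v$'s cost is at most
\[
D(u) + (n-1) - D(v) + 2d(v,w)\cdot|T_u(v)| - (\ell-1)\alpha .
\]
By the Nash equilibrium conditions this quantity must be non-negative, which is exactly~(\ref{eq:strat4}).

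The main obstacle is the middle step: justifying rigorously that after deleting \emph{all} of $v$'s edges, every vertex of $T_u(v)$ can still be reached from $u$ via the detour through $w$, and that the detour costs only $2d(v,w)$ extra per vertex rather than something larger. This requires using biconnectivity of $H$ carefully — in particular that $T_u(v)\cap H$ is connected and that $w$ lies in $H$ — together with the fact (from \cref{directedcycles} and the min-cycle machinery) that the alternative routes available inside $H$ do not secretly rely on edges $v$ bought. One must also be slightly careful that a vertex $y\in T_u(v)$ lying outside $H$ still hangs off the part of $T_u(v)$ that remains attached, so that the bound $d(v,y)$ for the "descent" portion is the pre-move tree distance and is not itself affected by the deletions beyond the $2d(v,w)$ already charged.
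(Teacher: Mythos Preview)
Your outline handles the easy terms correctly: the $-(\ell-1)\alpha$ from construction cost and the $D(u)+(n-1)-D(v)$ from routing through the new edge $vu$ are exactly as in Strategy~I. The problem is the middle step, and it is a genuine gap, not just a missing detail.

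Your proposed detour is $u\to v\to w\to y$. But after $v$ sells all its edges and keeps only $vu$, the \emph{only} edge incident to $v$ is $vu$; there is no way to move from $v$ toward $w$ except back through $u$. So this routing is simply unavailable. In the paper $w$ is never used as a waypoint; it serves only to bound the lengths of other path segments.

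The paper's argument for the $2d(v,w)\cdot|T_u(v)|$ term has two ingredients you are missing.
\begin{enumerate}
\item A structural sublemma: for distinct edges $va,vb\in L$, there is no edge between $T_u(a)$ and $T_u(b)$. This is proved by assuming such an edge $ts$ exists and applying Strategies~I and~II to $t$ and to $v$; summing the two Nash inequalities yields $\alpha<2(n-1)$, a contradiction. Without this, biconnectivity of $H$ only guarantees an escape edge from $T_u(x)$ to \emph{somewhere}, possibly another subtree $T_u(b)$ that is equally cut off after $v$'s deletions.
\item The actual routing. For each child $x$ of $v$, biconnectivity gives an edge $yz$ with $y\in T_u(x)$, $z\notin T_u(x)$, $z\neq v$; by the sublemma $z$ lies outside every $T_u(b)$, so $d(u,z)$ is unchanged by the deletion. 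For $c\in T_u(x)$ one then routes $c\to x\to y\to z\to u$, using $d(x,y)\le d(v,w)-1$ (since $y\in T_u(v)\cap H$) and $d(u,z)\le d(u,v)+d(v,w)$ (otherwise $z$ could be placed in $T_u(v)$, contradicting the choice of $w$). The arithmetic then gives an increase of strictly less than $2d(v,w)$ per vertex of $T_u(v)$.
\end{enumerate}
Both pieces are essential; in particular the sublemma invokes the Nash conditions and the standing hypothesis $\alpha>2(n-1)$, so it cannot be replaced by a purely graph-theoretic appeal to biconnectivity of $H$.
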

\begin{proof}
This strategy change is illustrated in Figure~\ref{fig:III}.
Again, the first three terms of~(\ref{eq:strat4}) follow as in \cref{cl:strategy-1}.
The final term arises because $v$ buys $(\ell-1)$ fewer edges after the strategy change.
It remains to explain the fourth term, namely $2d(v,w)\cdot|T_u(v)|$.
Let $L$ be the set of $\ell$ edges that $v$ sells. 
This change may cause the distances from $u$ to vertices in $T_u(v)$ to increase. 

To quantify this effect, we first prove that for any pair of vertices $a,b$ with edges $va,vb\in L$, the subtrees
$T_u(a)$ and $T_u(b)$ have no edge between them. 
Suppose, for a contradiction, that vertex $t\in T_u(a)$ buys an edge to vertex $s \in T_u(b)$.
By \cref{cl:strategy-1}, we have
\begin{alignat}{1}\label{eq:cl3a}
0 &\leq D(v) + n-1 - D(t) - (d(t,v)+1)\cdot|T_v(t)|  \nonumber\\
&\leq D(v) + n-1 - D(t)
\end{alignat}
On the other hand, by \cref{cl:strategy-2}, we have
\begin{alignat}{1}\label{eq:cl3b}
0 &\leq D(t) + n-1 - D(v) - (d(t,v)+1)\cdot|T_t(v)| +(k_{\max}-2)\cdot |T_t(vb)| - \alpha  \nonumber\\
&\le  D(t) + n-1 - D(v) - \alpha
\end{alignat}
The second inequality holds because $|T_t(vb)|=0$. To see this,  
note that the shortest path from $t$ to $v$ need not use $vb$ as $t\in T_u(a)$.
Furthermore, the path from $t$ to $b$ consisting of the edge $ts$ plus the subpath from $s$ to $b$ in $T_u(b)$ is at most as long as 
the path from $t$ to $b$ using the subpath from $t$ to $v$ in $T_u(v)$ plus the edge $vb$; otherwise it would be the case that $s\in T_u(a)$. 
Thus the shortest path from $t$ to any vertex need not use $vb$ and so $|T_t(vb)|=0$.
Summing (\ref{eq:cl3a}) and (\ref{eq:cl3b}) gives $\alpha <  2(n-1)$, a contradiction. 

Next, take $vx \in L$. Since $H$ is biconnected, there is an edge $yz$ with $y\in T_u(x)$ and $z\notin T_u(x)$ where $z\neq v$ and possibly $y=x$.
Furthermore, as proven above, $z\notin T_u(b)$ for any $vb\neq vx \in L$.
Therefore the distance from $z$ to $u$ does not increase when $v$ sells its edges. In addition, the distance 
between any vertex in $T_u(x)$ and $x$ is the same as before $vx$ was sold. 
We also have $d(x,y) \leq d(v,w) -1$, by our choice of $w$, because $y \in T_u(v)$ and $y\neq v$.
%Hence $d(v,y)\leq d(v,w)$ and $d(x,y) = d(v,y)-1$ $d(y,z)=1$ because $yz$ is an edge.

We claim $d(u,z) \leq d(u,v) + d(v,w)$. If not, assume $d(u,z) \geq d(u,v) + d(v,w) +1$.
Then we could choose $T_u(v)$ such that $z \in T_u(v)$, contradicting our choice of $w$.
Therefore, for any vertex $c \in T_u(x)$, originally we have  $d(c,u) = d(u,v) + d(v,c)$. 
But, after the change in strategy of vertex~$v$, there is a path 
from $c$ to $u$, via $x, y, z$, of length at most
\begin{alignat}{1}\label{eq:cl3c}
d(c,x) + d(x,y) + d(y,z) + d(z,u) &\leq (d(v,c)-1) + (d(v,w)-1) +1 +  \left( d(u,v) + d(v,w)\right) \nonumber \\
& = (d(c,v)+d(v,u)) -1 + 2d(v,w) \nonumber\\
& < d(c,u) + 2d(v,w)
\end{alignat}
Thus the length of the shortest path from $u$ to $c$ increases by less that $2d(v,w)$. This gives the fourth 
term~(\ref{eq:strat4}).
\end{proof}

This completes the description of the three main strategic options that we will study.
We remark that we will apply these strategies to vertices on a min-cycle.
This is valid because every biconnected component contains a min-cycle.
\begin{observation}\label{obs:biconn}
Any biconnected component $H$ of cardinality at least two contains a min-cycle~$C$.
\end{observation}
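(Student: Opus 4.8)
The plan is to reduce this directly to \cref{mincycleedges}. Since $H$ has cardinality at least two, pick any two vertices $u,v\in H$; by the definition of a biconnected component there are two vertex-disjoint paths between $u$ and $v$ inside $H$, and these two paths together form a cycle $C'$ with $V(C')\subseteq V(H)$. (Equivalently, every edge of $H$ lies on some cycle contained in $H$.) Now fix any edge $e$ of $C'$. Because $e$ lies on the cycle $C'$, it is not a cut-edge of $G$, so \cref{mincycleedges} applies and tells us that the smallest cycle $C$ of $G$ containing $e$ is a min-cycle.

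The only remaining point is to confirm that this min-cycle $C$ actually lies inside $H$, rather than merely somewhere in $G$. Here I would invoke maximality of the biconnected component: a cycle is itself a biconnected subgraph, and $C$ shares the edge $e$ — hence at least two vertices — with $H$. Since the union of two biconnected subgraphs that meet in at least two vertices is again biconnected, $C\cup H$ is biconnected, so by the maximality of $H$ we get $C\cup H=H$, i.e.\ $C\subseteq H$. Thus $C$ is a min-cycle contained in $H$, as required.

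This observation is essentially immediate from \cref{mincycleedges}, so I do not expect a genuine obstacle; the only step deserving a moment's care is the last one, checking that the min-cycle produced is contained in $H$. One can also dispatch that step by appealing to the standard block-decomposition fact that any cycle through an edge of a block stays within that block.
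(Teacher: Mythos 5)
Your proposal is correct and follows essentially the same route as the paper: pick an edge $e$ of $H$, note it cannot be a cut-edge, invoke \cref{mincycleedges} to get a min-cycle $C$ through $e$, and conclude $C\subseteq H$ by maximality. You merely spell out in more detail two steps the paper treats as immediate (the existence of an edge of $H$, and the containment $C\subseteq H$), which is fine but not a different argument.
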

\begin{proof}
As $H$ has cardinality at least two it contains an edge $e=(u,v)$.
Note that $e$ cannot be a cut edge or $H$ is not biconnected. Thus by \cref{mincycleedges}, $e$ is in a 
min-cycle $C$. It immediately follows that $C\subseteq H$.
\end{proof}

\section{Equilibria Conditions}

In this section, we derive further structural properties that must be satisfied at a Nash equilibrium.

\subsection{Biconnected Components}
We have already derived some properties of min-cycles in a Nash equilibrium $G$.
Recall, that we wish to prove the non-existence of biconnected components in a Nash equilibrium.
So, more generally, we will now derive properties satisfied by any biconnected component $H$ in a Nash equilibrium.

The girth of a graph $G$, denoted $\gamma$, is the length of its smallest cycle.\footnote{The girth is infinite if $G$ is a forest.}
The following lemma that lower bounds the girth of an equilibrium will subsequently be useful.
\begin{lemma}\label{smallestcycle}
	Let $\alpha > 2(n-1)$. Then the girth of any Nash equilibrium $G$ satisfies $\gamma(G) \geq \frac{2\alpha}{n-1} + 2$.
\end{lemma}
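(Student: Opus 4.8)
The plan is to consider the smallest cycle $C$ of $G$, which by \cref{mincycleedges} is a min-cycle and by \cref{directedcycles} is directed (using $\alpha > 2(n-1)$). Write $k = |C| = \gamma(G)$ and label its vertices $v_0, v_1, \dots, v_k = v_0$ with $v_i$ buying $e_i = v_i v_{i+1}$. The idea is to pick an arbitrary vertex, say $v_0$, examine the Nash deviation in which $v_0$ sells the single edge $e_0 = v_0 v_1$, and show that the resulting increase in connection cost is small enough that the inequality $\alpha \le (\text{increase})$ forces $k$ to be large.

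First I would bound the increase in $D(v_0)$ caused by deleting $e_0$. Since $C$ is a min-cycle of length $k$, the anti-clockwise path from $v_0$ to $v_1$ along $C$ has length $k-1$, so by \cref{obs:increase} (with $k_{\max} = k$ here, as $C$ is the only cycle available and is smallest) the distance from $v_0$ to any single vertex increases by at most $k-2$ after deleting $e_0$. But this crude bound over all $n-1$ vertices only gives $\alpha \le (k-2)(n-1)$, which is the wrong direction. The better approach is to observe that only vertices whose shortest path from $v_0$ genuinely used $e_0$ can be affected, and to route them around the cycle: a vertex $x$ whose shortest path from $v_0$ starts along $e_0$ lies ``below'' $v_1$, and after deletion $v_0$ reaches it via the anti-clockwise path $v_0, v_{k-1}, v_{k-2}, \dots, v_1$ and then down to $x$. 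For such an $x$ at distance $d(v_0, x) = 1 + d(v_1, x)$, the new distance is at most $(k-1) + d(v_1, x) = d(v_0, x) + (k-2)$. The key refinement is a counting argument: since $C$ is the \emph{smallest} cycle, the subtrees hanging off $C$ below distinct $v_i$ are vertex-disjoint shortest-path subtrees, and we can actually average the detour cost. For the set $S$ of vertices affected (those in $T_{v_0}(v_1) \setminus$ nothing, i.e. strictly below $e_0$), each gains at most $k-2$, but more usefully we should instead sum over all of $C$ the symmetric statement and exploit that each vertex of $G$ is ``below'' exactly one $v_i$ on the cycle.

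The cleanest route, which I would actually pursue: sum the single-edge-sale inequality over \emph{all} $k$ vertices $v_i \in C$. For each $i$, selling $e_i$ costs $v_i$ at least $\alpha$, and the connection-cost increase is at most $\sum_{x} (\text{detour for } x)$. When we add these $k$ inequalities, each vertex $x \notin C$ contributes its detour cost to exactly those $v_i$ for which the original shortest path from $v_i$ to $x$ used $e_i$ — and by the min-cycle structure this happens for a contiguous arc of roughly half the cycle, with detour lengths $1, 2, \dots$ summing to $O(k^2/4)$ per vertex; meanwhile vertices on $C$ itself contribute a bounded amount. This yields $k \cdot \alpha \le (n-1)\cdot \frac{k^2}{?} \cdot (\text{const}) $ and rearranging gives $\alpha \le \frac{(n-1)(k-2)}{2}$, i.e. $k \ge \frac{2\alpha}{n-1} + 2$, as claimed.

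The main obstacle is making the averaging rigorous — specifically, proving that for a fixed external vertex $x$, the number of cycle vertices $v_i$ whose shortest path to $x$ passes through $e_i$, weighted by the detour incurred, sums to at most $\binom{k-1}{2}/(k-1) \cdot$ something clean, uniformly in $x$. This requires pinning down, for each $x$, a well-defined ``entry point'' $v_{j(x)}$ on $C$ (the vertex of $C$ closest to $x$) and checking that $v_i$'s path to $x$ uses $e_i$ exactly when $v_{i+1}$ is strictly on the clockwise-to-$j(x)$ side, so the detour for $v_i$ is the excess of going anti-clockwise around to $v_{j(x)}$, namely $d_C(v_i, v_{j(x)}) \le$ something telescoping in $i$; summing a telescoping/arithmetic series over the relevant arc gives the $\frac{(k-1)(k-2)}{2}$-type bound once, independent of $x$, which after dividing by $k$ and comparing to $k\alpha$ delivers the lemma.
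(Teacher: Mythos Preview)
Your overall strategy---summing the single-edge-sale Nash inequalities over all $k$ vertices of the smallest cycle and bounding the total detour cost---is sound and does yield the lemma. The paper's proof is closely related but organized differently: rather than summing, it first proves the key counting fact directly (for every vertex $u\in V$, the set $L_u=\{i:\text{every shortest }v_i\text{--}u\text{ path uses }e_i\}$ has $|L_u|\le (k-1)/2$, via a short telescoping label argument), then applies pigeonhole to find \emph{one} index $i$ with $|T_{v_i}(e_i)|\le (n-1)/2$, and uses the Nash condition just once for that $v_i$ together with the crude per-vertex detour bound $k-2$. Your summation and the paper's averaging are two sides of the same coin, both resting on the same counting fact; the paper's version is simply cleaner because it avoids summing the detours explicitly.

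Where your write-up is shakier is the refined detour computation. Defining a unique ``entry point'' $v_{j(x)}$ and asserting that the shortest $v_i$--$x$ path goes along $C$ to $v_{j(x)}$ is not justified in general: the entry point need not be unique, and a shortest path from $v_i$ may leave $C$ at some $v_{i+s}$ with $s<d_C(v_i,v_{j(x)})$ if there are other attachments to $C$. Fortunately you do not need this refinement. The crude observation that (i) at most $(k-1)/2$ indices $i$ have any detour for a fixed $x$, and (ii) each such detour is at most $k-2$, already gives the per-vertex bound $\frac{(k-1)(k-2)}{2}$ you quote; summed over $n$ vertices and compared with $k\alpha$ this yields $\alpha\le \frac{(k-2)(n-1)}{2}$ after using $k\le n$. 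So drop the arithmetic-series detour analysis and instead prove the $|L_x|\le (k-1)/2$ count---that is the genuine content, and it is exactly what the paper isolates.
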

\begin{proof}
	Take a minimum length cycle $C = v_0, v_1\dots, v_{k-1}, v_{k}=v_0$ in $G$. 
	%(1) {\em $C$ is not directed.} {\bf [We can remove this case as we know the smallest cycle is a min-cycle, so is directed...]}
	%Then there is a vertex $v$ which buys two edges, say $e_1, e_2$ of $C$. 
	%In a shortest path tree rooted at $v$, either $T_v(e_1)$ or $T_v(e_2)$, without loss of generality $T_v(e_1)$, must 
	%have cardinality at most $\frac{n-1}{2}$. Because the path $P=C\setminus \{e_1\}$ can be used instead of $e_1$,
	%if $v_1$ sells $e_1$ its cost can increase by at most $(|C|-2)\cdot\frac{n-1}{2} -\alpha$. 
	%This must be non-negative by the Nash equilibrium conditions.
	%Rearranging, this implies that $\frac{2\alpha}{n-1} + 2\leq |C|$ as desired.
	By \cref{mincycleedges}, $C$ is a min-cycle. Therefore, By \cref{directedcycles}, $C$ is a directed cycle.
	So, as stated, we may assume $e_i=v_iv_{i+1}$ is bought by $v_i$, for each $0 \leq i \leq k-1$. 
	Now, for each vertex $u\in V$, we define a set $L_u\subseteq \{0,1,\dots, k-1\}$ as follows. We have $i\in L_u$ if and only if
	{\em every} shortest path from $v_i\in C$ to $u$ uses the edge $e_i$.
	(In particular, $u\in T_{v_i}(e_i)$ for every shortest path tree $T$ rooted at $v_i$.)
	
	We claim $|L_u| \le \frac{|C|-1}{2}$ for every vertex $u$.
	If not, take a vertex $u$ with $|L_u| > \frac{|C|-1}{2}$. 
	Let $d(v_i,u)$ be the shortest distance between $u$ and $v_i\in C$. 
	Next give $v_i$ a label $\ell_i= d(v_i,u)-d(v_{i+1},u)$. Observe that $\ell_i\in \{-1,0,1\}$.
	Furthermore, the labels sum to zero as
	$$\sum_{i=0}^{|C|-1} \ell_i \ =\  \sum_{i=0}^{|C|-1} \left( d(v_i,u)-d(v_{i+1},u) \right) 
		\ =\ \sum_{i=0}^{|C|-1} d(v_i,u) - \sum_{i=1}^{|C|} d(v_{i},u)  \ = \ 0$$
	Now take a vertex $v_i$ in $C$ that uses $e_i$ in {\em every} shortest path to $u$; that is, $i\in L_u$.
	Then $\ell_i=1$ and $\ell_{i-1}\ge 0$. 
	In particular, if $|L_u| > \frac{|C|-1}{2}$ then there are $> \frac{|C|-1}{2}$ positive 
	labels and $> 1+\frac{|C|-1}{2}$ non-negative labels.
	Hence, there are $< |C|-\left(1+\frac{|C|-1}{2}\right) = \frac{|C|-1}{2}$ negative labels.
	But then the sum of the labels is strictly positive, a contradiction.
	
	Now, for each $i$, let $T_{v_i}$ be a shortest path tree rooted at $v_i$ such that the size of $T_{v_i}(e_i)$ is minimized.
	As $|L_u| \le \frac{|C|-1}{2}$ for every vertex $u$, there exists a $v_i$ with  $|T_{v_i}(e_i)\setminus C| \leq \frac{n-|C|}{2}$.
	On the other hand, clearly $|T_{v_j}(e_j)\cap C| \leq \frac{|C|-1}{2}$ for every $v_j$.
	It follows that $|T_{v_i}(e_i)|\leq \frac{n-1}{2}$. But then if $v_i$ sells $e_i$ its cost increases by at most
	$(|C|-2)\cdot |T_{v_i}(e_i)| - \alpha \leq (|C|-2)\cdot \frac{n-1}{2} - \alpha$. 
	This must be non-negative by the Nash equilibrium conditions.
	Rearranging, this implies that $\frac{2\alpha}{n-1} + 2\leq |C|$ as desired.
\end{proof}
We remark that \cref{smallestcycle} actually holds for all $\alpha$. We omit the proof of this fact as we 
only need the result for the case of $\alpha > 2(n-1)$.

\begin{lemma}\label{ceilingfloorequal}
	In a min cycle $C$, $T_{v_{\lfloor\frac{k}{2}\rfloor}}(v_0)= T_{v_{\lceil\frac{k}{2}\rceil}}(v_0)$ for some choice 
	of $T_{v_{\lfloor\frac{k}{2}\rfloor}}(v_0)$ and $T_{v_{\lceil\frac{k}{2}\rceil}}(v_0)$.
\end{lemma}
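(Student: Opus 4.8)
The plan is to reduce to the case where $k=|C|$ is odd (if $k$ is even then $\lfloor k/2\rfloor=\lceil k/2\rceil$ and the two trees in the statement are literally the same object, so there is nothing to prove), and then to exhibit a single vertex set $D$ which, for suitable choices of shortest path trees, equals $T_{v_{\lfloor k/2\rfloor}}(v_0)$ and $T_{v_{\lceil k/2\rceil}}(v_0)$ simultaneously. Write $k=2m+1$, so $\lfloor k/2\rfloor=m$ and $\lceil k/2\rceil=m+1$; the vertices $v_m$ and $v_{m+1}$ are joined by the edge $e_m$ opposite $v_0$ (\cref{obs:trees}), and $d(v_0,v_m)=d(v_0,v_{m+1})=m$ because $C$ is a min-cycle. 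Define
\[
D \ :=\ \{\, x\in V \ :\ d(v_m,x)=d(v_{m+1},x)=m+d(v_0,x)\,\}.
\]
Note $v_0\in D$, and $v_0$ is the unique vertex of $D$ minimising $d(v_m,\cdot)$ (and $d(v_{m+1},\cdot)$). The goal is to build shortest path trees $T_{v_m}$ and $T_{v_{m+1}}$ whose subtree rooted at $v_0$ is exactly $D$.

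The construction rests on two facts. First, $D$ is closed under shortest $v_0$-paths: if $x\in D$ and $y$ lies on a shortest path from $v_0$ to $x$, then $y\in D$. This is a one-line triangle-inequality computation: from $d(v_0,x)=d(v_0,y)+d(y,x)$ and $d(v_m,x)\le d(v_m,y)+d(y,x)\le m+d(v_0,y)+d(y,x)=m+d(v_0,x)$ (and symmetrically for $v_{m+1}$) equality is forced throughout. Second, every vertex $z\notin D$ has a neighbour $w$ with $d(v_m,w)=d(v_m,z)-1$ and $w\notin D$. I would prove this by cases. If $d(v_m,z)<m+d(v_0,z)$, any shortest-path predecessor $w$ of $z$ from $v_m$ works, since a predecessor in $D$ would satisfy $d(v_m,w)=m+d(v_0,w)$ and hence force $d(v_m,z)\ge m+d(v_0,z)$, a contradiction. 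If instead $d(v_m,z)=m+d(v_0,z)$ but $z\notin D$, then $d(v_{m+1},z)\ne m+d(v_0,z)$; since $d(v_{m+1},z)$ is at least $d(v_m,z)-1=m+d(v_0,z)-1$ (using the edge $v_mv_{m+1}$) and at most $m+d(v_0,z)$ (routing $v_{m+1}\to v_0\to z$), it equals $m+d(v_0,z)-1$. Concatenating the edge $v_mv_{m+1}$ with a shortest $v_{m+1}$–$z$ path then gives a shortest $v_m$–$z$ path avoiding $v_0$, whose penultimate vertex $w$ satisfies $d(v_{m+1},w)=m+d(v_0,z)-2$ (or $w=v_{m+1}$), which in either case is incompatible with $w\in D$.

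Given these two facts I would construct $T_{v_m}$ directly by assigning parents: each $x\in D\setminus\{v_0\}$ receives as parent a neighbour in $D$ lying on a shortest $v_0$–$x$ path (it is in $D$ by the first fact and at $v_m$-distance $d(v_m,x)-1$ since $x\in D$); $v_0$ receives parent $v_1$; and each $z\notin D$ receives a parent not in $D$ by the second fact. Parents strictly decrease $d(v_m,\cdot)$, so this is a valid shortest path tree. Its subtree rooted at $v_0$ contains $D$, since the $D$-parent chains descend (in both $d(v_m,\cdot)$ and $d(v_0,\cdot)$) to $v_0$, and contains nothing else, since every non-$D$ vertex has only non-$D$ ancestors while $v_0\in D$. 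The definition of $D$ and the equalities $d(v_0,v_m)=d(v_0,v_{m+1})=m$ are symmetric in $v_m$ and $v_{m+1}$, so the identical recipe yields a shortest path tree $T_{v_{m+1}}$ with $v_0$-subtree $D$. Hence $T_{v_m}(v_0)=T_{v_{m+1}}(v_0)=D$.

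The main obstacle is the second fact, and specifically its second case: one must produce, for a vertex $z$ all of whose shortest paths from $v_m$ could a priori run through $v_0$, a genuinely different shortest path from $v_m$ that sidesteps $v_0$ and, moreover, reaches $z$ from outside $D$. The leverage is that $z$ not being routed through $v_0$ from $v_{m+1}$ pins $d(v_{m+1},z)$ to be exactly one less than the through-$v_0$ distance, which is precisely what makes the detour across the opposite edge $e_m$ a shortest path while keeping the penultimate vertex out of $D$; everything else is routine bookkeeping about shortest path trees, and in particular this argument uses only that $C$ is a min-cycle, not the equilibrium conditions.
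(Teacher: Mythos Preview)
Your proof is correct and takes a genuinely different route from the paper's. The paper argues by adjustment: fixing two shortest path trees, it picks a vertex $x$ in the symmetric difference of the two $v_0$-subtrees and shows, via a short path-swapping argument through the edge $v_{\lfloor k/2\rfloor}v_{\lceil k/2\rceil}$, that $x$ can be moved out of one subtree or into the other; iterating drives the symmetric difference to zero. You instead construct the common subtree explicitly: the set $D=\{x:d(v_m,x)=d(v_{m+1},x)=m+d(v_0,x)\}$ is shown to be downward-closed along shortest $v_0$-paths, and every vertex outside $D$ is shown to have a $v_m$-predecessor (and, symmetrically, a $v_{m+1}$-predecessor) also outside $D$, so that parent assignments can be made to realise $D$ as the $v_0$-subtree in both trees simultaneously. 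Your approach is more explicit and avoids the implicit induction in the paper's ``fix one discrepancy at a time'' argument; the paper's approach is shorter to state but leaves the simultaneity of all the local repairs to the reader. Both use only the min-cycle property and not the equilibrium conditions.
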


\begin{proof}
	For even $k$, $\lfloor\frac{k}{2}\rfloor = \lceil\frac{k}{2}\rceil$, and the result is trivial.
	For odd $k$, suppose $T_{v_{\lfloor\frac{k}{2}\rfloor}}(v_0) \not= T_{v_{\lceil\frac{k}{2}\rceil}}(v_0)$. 
	Without loss of generality, let $x \in T_{v_{\lfloor\frac{k}{2}\rfloor}}(v_0) \setminus T_{v_{\lceil\frac{k}{2}\rceil}}(v_0)$. 
	Therefore there is a shortest 
	path $P$ from $x$ to $v_{\lfloor\frac{k}{2}\rfloor}$ which goes through $v_0$. Hence, there must be a path of the same 
	length from $x$ to  $v_{\lceil\frac{k}{2}\rceil}$. Thus either $x$ may belong to $T_{v_{\lceil\frac{k}{2}\rceil}}(v_0)$ for a 
	different choice of $T_{v_{\lceil\frac{k}{2}\rceil}}(v_0)$, or there is a path $Q$ between $x$ and $v_{\lceil\frac{k}{2}\rceil}$ that 
	is strictly shorter than $P$ and does not use $v_0$. But then $Q \cup v_{\lceil\frac{k}{2}\rceil}v_{\lfloor\frac{k}{2}\rfloor}$ 
	is a path from $x$ to $v_{\lfloor\frac{k}{2}\rfloor}$ of length at most $|P|$ which does not use $v_0$.
	This implies we can choose $T_{v_{\lfloor\frac{k}{2}\rfloor}}(v_0)$ 
	such that $x \not\in T_{v_{\lfloor\frac{k}{2}\rfloor}}(v_0)$. It follows that we
	may choose $T_{v_{\lfloor\frac{k}{2}\rfloor}}(v_0) = T_{v_{\lceil\frac{k}{2}\rceil}}(v_0)$, as desired. 
\end{proof}

Given a biconnected component $H$ and a vertex $v\in H$, let $S_H(v)$ 
be the set of vertices in $G$ that are closer to $v$ than to any other vertex of $H$.
\begin{lemma}\label{lem:SH}
For any pair of vertices $u$ and $v$ in a biconnected component $H$, we have $S_H(u)\cap S_H(v)=\emptyset$.
\end{lemma}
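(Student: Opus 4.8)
The plan is to derive this straight from the definition of $S_H$, with no appeal to the equilibrium conditions. First I would fix distinct vertices $u,v\in H$ and suppose, for a contradiction, that some vertex $x\in G$ lies in $S_H(u)\cap S_H(v)$. From $x\in S_H(u)$ the definition gives that $x$ is strictly closer to $u$ than to every other vertex of $H$; instantiating "other vertex" with $v\in H$ yields $d(x,u)<d(x,v)$. Symmetrically, $x\in S_H(v)$ together with $u\in H$ gives $d(x,v)<d(x,u)$. Chaining these two inequalities produces $d(x,u)<d(x,u)$, a contradiction. Hence no such $x$ exists and $S_H(u)\cap S_H(v)=\emptyset$.

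The only point worth flagging explicitly is the strictness hidden in the phrase "closer to $v$ than to any other vertex of $H$": a vertex that is equidistant from two nearest vertices of $H$ is deliberately placed in neither set, and it is exactly this convention that makes the two displayed inequalities incompatible. (If distances could be infinite this would need a word, but in a Nash equilibrium $G$ is connected, so every $d(x,\cdot)$ is finite and the comparison is unproblematic.) Nothing about biconnectedness, directed min-cycles, or shortest-path trees is used — the statement is a purely metric fact valid for an arbitrary vertex subset $H$ of an arbitrary graph.

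Consequently I expect no real obstacle in the proof itself; the content of the lemma is entirely in having set up the right notion of $S_H$. Its purpose is downstream: the lemma says the sets $\{S_H(v)\}_{v\in H}$ are pairwise disjoint, so together with the trivial fact that each $v$ satisfies $v\in S_H(v)$ one gets $\sum_{v\in H}|S_H(v)|\le n$, and that counting bound is where the subsequent analysis will do its work.
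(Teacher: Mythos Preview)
Your argument is correct and is genuinely simpler than the paper's. The paper, after observing that any $x\in S_H(u)\cap S_H(v)$ must lie outside $H$, picks such an $x$ closest to $u$ and $v$, finds internally disjoint shortest paths $P$ from $x$ to $u$ and $Q$ from $x$ to $v$, and then argues that $H\cup P\cup Q$ is biconnected, contradicting the \emph{maximality} of $H$ as a biconnected component. Your proof bypasses all of this: the two strict inequalities $d(x,u)<d(x,v)$ and $d(x,v)<d(x,u)$ already clash, so neither the biconnectivity of $H$ nor its maximality is needed, and the lemma indeed holds for an arbitrary vertex subset $H$ of an arbitrary connected graph. The paper's route does buy something in spirit---it explains structurally \emph{why} a vertex equidistant from two nearest points of $H$ cannot be absorbed into $H$---but for the disjointness statement as written your two-line metric argument is both sufficient and cleaner.
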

\begin{proof}
For a contradiction, take $x\in S_H(u)\cap S_H(v)$. 
Note that $x\in S_H(x)$ and no other $S_H(v)$ for any $x\in H$. Thus, it must be that $x\notin H$.
We may assume that $x$ is the closest vertex to $u$ and $v$ in $S_H(u)\cap S_H(v)$.
In particular, there are shortest paths $P$ from $x$ to $u$ and $Q$ from $x$ to $v$ 
that are disjoint except for their source $x$. But then $H\cup P\cup Q$ is biconnected,
contradicting the maximality of $H$.
\end{proof}

\begin{lemma}\label{biconnected}
Let $\alpha > 2(n-1)$. Then any min-cycle $C$ of a biconnected component $H$ has a vertex $v\in C$ 
which buys an edge $e \in H\setminus C$.
\end{lemma}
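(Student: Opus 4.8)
The plan is to argue by contradiction: suppose every vertex $v\in C$ buys only edges of $H$ that lie on $C$ itself. Since $C$ is a directed min-cycle (by \cref{directedcycles}), each $v_i$ buys exactly the edge $e_i=v_iv_{i+1}$, and by our assumption this is the only $H$-edge bought by $v_i$. In particular, every edge of $H$ incident to a vertex of $C$ either lies on $C$ or is bought by a vertex outside $C$. The aim is to derive a contradiction with the bound $\alpha>2(n-1)$ by comparing connection costs of two suitably chosen antipodal vertices on $C$, using the basic strategy inequalities from \cref{cl:strategy-1,cl:strategy-2}.

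First I would set $v=v_0$ and $u=v_{\lfloor k/2\rfloor}$ (or the pair of antipodal vertices given by \cref{ceilingfloorequal}), so that $u$ and $v$ are a pair of vertices at distance roughly $k/2$ on $C$. Apply Strategy I (\cref{cl:strategy-1}) with $v$ swapping its cycle-edge $e_0=v_0v_1$ for $v_0u$: here $w=v_1\notin T_u(v_0)$ because the shortest path from $u$ to $v_1$ runs the other way around $C$ (this uses that $C$ is a min-cycle and chordless, \cref{obs:chordless}), so we get $0\le D(u)+(n-1)-D(v)-(d(u,v)+1)\cdot|T_u(v)|$. Symmetrically, apply Strategy II (\cref{cl:strategy-2}) with $u$ selling both the edge $e_{\lfloor k/2\rfloor}$ it buys on $C$ \emph{and} — crucially — using the hypothesis that $u$ buys \emph{no} edge of $H\setminus C$, so $u$ has a second cycle-edge to sell, or alternatively sells $e_{\lfloor k/2\rfloor}$ alone via \cref{obs:increase}, swapping for $uv$. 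Since $u$ buys only cycle edges, selling all of them and buying $uv$ can only save construction cost, and the distance increases are controlled by $k_{\max}-2\le k-2$ because every edge of $H$ sits in a min-cycle. This yields an inequality of the form $0\le D(v)+(n-1)-D(u)-(\text{something nonnegative})-\alpha+(k-2)\cdot(\text{small})$.

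The key quantitative point is that $|T_u(v)|$ and the analogous subtree sizes are large — at least a constant fraction of $n$ is forced to lie "below" $v$ from $u$'s perspective, or at minimum $|T_u(v)|\ge 1$ — while the term $(d(u,v)+1)=\Theta(k)$ is itself at least $\frac{\alpha}{n-1}+2$ by the girth bound \cref{smallestcycle} (since $d(u,v)\approx k/2$ and $\gamma(G)\ge \frac{2\alpha}{n-1}+2$). Adding the two inequalities, the $D(u)$ and $D(v)$ terms cancel, leaving $0\le 2(n-1) - (d(u,v)+1)|T_u(v)| + (k-2)\cdot(\cdots) - \alpha$; the negative $(d(u,v)+1)|T_u(v)|$ term, being at least order $k\ge \frac{2\alpha}{n-1}+2$, must dominate and force $\alpha\le 2(n-1)$, contradicting the hypothesis. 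The main obstacle I anticipate is bounding the error term involving $k_{\max}$ in Strategy II: one must verify that $T_u(e)$ (the vertices whose shortest path from $u$ uses $u$'s cycle edge) is small — ideally that $|L_u|\le \frac{|C|-1}{2}$ exactly as in the proof of \cref{smallestcycle}, so that the $(k-2)\cdot|T_u(e)|$ penalty is at most $(k-2)\cdot\frac{n-1}{2}$, which still must be beaten. Making the arithmetic close cleanly — choosing which vertex of $C$ to designate as $u$ so that its cycle-edge subtree is provably small, exactly as in \cref{smallestcycle} — is the delicate step, and the $\alpha>2(n-1)$ threshold (rather than the paper's headline $3n-3$) strongly suggests the intended argument is precisely this symmetric two-inequality cancellation rather than anything requiring Strategy III.
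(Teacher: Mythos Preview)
Your approach has a genuine gap. You argue by contradiction, assuming no vertex of $C$ buys an edge of $H\setminus C$; since $C$ is directed, each $v_i$ then buys exactly one $H$-edge, namely $e_i$. But then Strategy~II is unavailable at every vertex of $C$: \cref{cl:strategy-2} requires the vertex to sell \emph{two} edges of $H$, and your claim that ``$u$ has a second cycle-edge to sell'' is simply false --- in a directed cycle each vertex buys a single cycle edge. Your fallback, selling $e_{\lfloor k/2\rfloor}$ alone and swapping for $uv$, is Strategy~I and carries no $-\alpha$ term. Summing two Strategy~I inequalities yields only $0\le 2(n-1)-(\text{nonnegative})$, which says nothing about $\alpha$. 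The $-\alpha$ term you need never materializes, and the sketch of the arithmetic that follows (``must dominate and force $\alpha\le 2(n-1)$'') has no inequality to stand on.

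The paper's argument is structurally different and much shorter. It splits into two cases. If $C\neq H$, then by connectivity of $H$ there is an edge $f\in H\setminus C$ incident to $C$; this $f$ lies in a directed min-cycle $D$ (by \cref{mincycleedges,directedcycles}), and since $D$ meets $C$ but contains $f\notin C$, the vertex of $C$ at which $D$ leaves $C$ buys an edge of $D\setminus C$ --- done, with no cost inequalities at all. If $C=H$, the paper uses a local ``skip'' move: choose $v_1$ minimizing $|S_H(v_1)|$ and have $v_0$ swap $e_0=v_0v_1$ for $v_0v_2$; since $|C|\ge 7$ by \cref{smallestcycle}, this strictly decreases $v_0$'s connection cost (it loses $|S_H(v_1)|$ but gains at least $|S_H(v_2)|+|S_H(v_3)|$), contradicting equilibrium. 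Your antipodal-vertex machinery is not needed, and in fact cannot close: the very hypothesis you are trying to contradict is what would have supplied the second edge required for Strategy~II.
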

\begin{proof}
	First, assume that $C=H$. By \cref{smallestcycle}, $C=\{v_0,v_1,\dots, v_k=v_0\}$ has length 
	$\frac{2\alpha}{n-1}+2 > 6$ as $\alpha > 2(n-1)$.
	By \cref{directedcycles}, $v_i$ buys the edge $e_i=v_iv_{i+1}$ for $0 \leq i \leq k-1$.
	Without loss of generality, let $v_1  = argmin_{v \in C}\, |S_H(v)|$. 
	Then suppose $v_0$ sells $e_0$ and buys $v_0v_2$. This reduces its costs by at least $|S_H(v_2)| + |S_H(v_3)| -  |S_H(v_1)| > 0$, 
	contradicting the Nash equilibrium conditions. 
	
	Second, assume $C \not=  H$. Take an edge $f \in H\setminus C$ incident to a vertex of $C$. 
	As $H$ is biconnected, $f$ is in a cycle $D$ which, by \cref{mincycleedges}, we may assume is a min-cycle.
	Furthermore, by \cref{directedcycles}, $D$ is directed. But $D$ contains $f$ and so intersects $C$. Therefore, 
	there must be a vertex of $C$  which buys an edge $e \in D\setminus C$ as desired.
\end{proof}

\subsection{The Key Lemma}

The following lemma will be critical in proving the main result.
\begin{lemma}\label{multipleoutedge}
Let $\alpha > 2(n-1)$. In a maximum length min-cycle $C$ of a biconnected component $H$, there exist two 
vertices $u,v\in C$ with $d(u,v)\geq \frac{k_{\max}}{3}$ which buy edges $f, g\not\in C$ respectively. 
\end{lemma}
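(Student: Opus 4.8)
The plan is to argue by contradiction. Write $C=\{v_0,v_1,\dots,v_k=v_0\}$ with $k=k_{\max}$; by \cref{directedcycles} this cycle is directed, so $v_i$ buys $e_i=v_iv_{i+1}$. Let $W\subseteq V(C)$ be the set of vertices that buy an edge of $H\setminus C$; by \cref{biconnected}, $W\neq\emptyset$. Suppose, for contradiction, that no two vertices of $W$ are at distance at least $k/3$. The first step is a combinatorial claim: then $W$ lies in an arc $A$ of $C$ of length $L<k/3$. To see this, list $W$ in cyclic order and consider the gaps between consecutive elements. No gap can have length in $[k/3,2k/3]$, since its two endpoints would then be vertices of $W$ at distance at least $k/3$; and if every gap had length $<k/3$, then, starting from a fixed element of $W$ and summing gaps while moving around $C$, the running total would have to take a value in $[k/3,2k/3]$ (the increments being $<k/3$), again exhibiting two vertices of $W$ at distance at least $k/3$. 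Hence some gap has length $>2k/3$, and $W$ is contained in the complementary arc, of length $<k/3$. (If $|W|\le 1$ this is trivial.)

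The second step records the structural consequences. Write $A=\{v_a,\dots,v_b\}$ with $b-a=L<k/3$, and let $\bar A$ be the complementary arc, of length $k-L>2k/3$. Every vertex $v_j$ interior to $\bar A$ buys no edge outside $C$, so it has degree exactly $2$ in $H$, with $H$-neighbours $v_{j-1}$ and $v_{j+1}$; hence every edge of $H\setminus C$ is incident only to $V(A)\cup(V(H)\setminus V(C))$, the interior of $\bar A$ is a chordless path of $H$ attached to the rest of $H$ only at $v_a$ and $v_b$, and, since $C$ is a min-cycle, $d_G(v_a,v_b)=d_C(v_a,v_b)=L$. Thus $H$ is the long path $\bar A$ (of length $k-L$) joining $v_a$ to $v_b$, glued to a core $H'=H\setminus\operatorname{int}(\bar A)$ that contains the short geodesic $A$ together with at least one further edge, since $H\setminus C\neq\emptyset$. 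Applying \cref{mincycleedges} to an edge of $H'$ not on $A$ (which is not a cut-edge of $G$, as it lies in the biconnected component $H$) yields either a min-cycle $D\neq C$ contained in $H'$, or a min-cycle of the form $\bar A\cup A'$ of length exactly $k_{\max}$, where $A'$ is a second geodesic from $v_a$ to $v_b$ lying in $H'$; in the latter case \cref{biconnected} applied to $\bar A\cup A'$ produces a vertex of the core buying an edge outside that cycle.

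The last step, which I expect to be the main obstacle, is to turn this structure into a violated equilibrium inequality. The leverage is the imbalance: $\bar A$ carries weight more than $2k_{\max}/3$, yet its endpoints are only $L<k_{\max}/3$ apart in $G$. I would pick a vertex $v$ deep inside $\bar A$ --- necessarily of degree $2$ in $H$ and far from $A$ --- and analyse a Strategy~I or Strategy~II swap in which $v$ exchanges $e_v$ for a chord toward $v_a$ or $v_b$, so that $v$ inherits a shortest-path tree rooted near an endpoint of the long arc; here \cref{obs:increase} is in fact tight along $\bar A$, because between the distant endpoints of a deleted edge of $\bar A$ no shortcut through $H\setminus C$ can help. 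Combining the resulting inequality with the one from \cref{biconnected} applied to the core (or to $\bar A\cup A'$), which controls $D(v_a)$ or $D(v_b)$, one should see the $D(\cdot)$ terms cancel and the number of affected vertices be bounded by $n-1$, collapsing everything to $\alpha\le 2(n-1)$, which contradicts the hypothesis. The genuine difficulty is the accounting for vertices not on $C$ --- the pendant trees hanging off $C$ and the vertices of $V(H)\setminus V(C)$ --- since a single heavy pendant tree sitting just past the rerouting vertex can by itself swamp a one-edge swap. This forces one either to use a two-edge, construction-cost-gaining move (so that the $-\alpha$ term absorbs the slack) or to route through a vertex whose off-cycle mass is provably small; controlling this term is where the real work of the proof lies.
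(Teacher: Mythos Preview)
Your Step~1 (the gap argument confining $W$ to an arc of length $<k/3$) is correct and is in fact cleaner than the paper, which simply assumes this reduction without justification. The substantive error is in Step~2: from ``$v_j\notin W$'' you may only conclude that $v_j$ \emph{buys} no edge of $H\setminus C$; you cannot conclude $\deg_H(v_j)=2$, because a vertex $u\in H\setminus C$ may buy an edge $uv_j$. So your claim that the interior of $\bar A$ is attached to the rest of $H$ only at $v_a$ and $v_b$ is unjustified, and with it the whole structural picture you build on it. This is not a technicality: the paper devotes two of its three cases to ruling out exactly this possibility (and only for vertices near the antipode, namely in $X=\{v_{\lceil k/3\rceil},\dots,v_{\lfloor k/2\rfloor}\}$ and for $v_{\lfloor k/2\rfloor+1}$). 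The argument there is that if such a $v_j$ receives an incoming edge $d$, one places $d$ in a shortest (hence directed min-) cycle $D$, takes $w$ to be the last $C$-vertex before $v_j$ along $D$ (necessarily $w\in W$), and observes that replacing the $w$--$v_j$ arc of $D$ not through $d$ by the clockwise $C$-path from $w$ to $v_j$ yields a cycle through $d$ no longer than $D$ that is not directed --- contradicting \cref{directedcycles}.

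Your Step~3 is, as you acknowledge, only a plan, and the paper's actual endgame looks rather different from what you sketch. Once $\deg_H=2$ is established near the antipode, the paper does not pick a vertex deep in $\bar A$; it works with $v_{\lfloor k/2\rfloor-1}$, $v_{\lfloor k/2\rfloor}$, $v_{\lceil k/2\rceil}$ and $v_0$, combining two local one-step swaps (which relate $|S_H(v_{\lfloor k/2\rfloor})|$, $|S_H(v_{\lfloor k/2\rfloor+1})|$ and $|T_{v_{\lfloor k/2\rfloor}}(v_0)|$), a Strategy~I swap from $v_{\lceil k/2\rceil}$ to $v_0$, and a Strategy~II swap from $v_0$ to $v_{\lceil k/2\rceil}$, so that the $D(\cdot)$ and tree-size terms cancel exactly and one is left with $\alpha\le 2(n-1)$. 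Note that this only requires degree-$2$ information on $X\cup\{v_{\lfloor k/2\rfloor+1}\}$, which is all the preceding cases supply; your approach would need degree control along the entire long arc, which you have not established.
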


\begin{proof}
Take a maximum length min-cycle $C$. By \cref{biconnected}, we may assume $v_0$ buys an edge $f$ outside $C$.  
Suppose, for the sake of contradiction, only vertices in $W = \{ v_0,\dots v_{\lceil\frac{k_{\max}}{3}\rceil-1}\}$ buy an edge outside $C$. 
Let $X =  \{ v_{\lceil\frac{k_{\max}}{3}\rceil},\dots v_{\lfloor\frac{k_{\max}}{2}\rfloor}\}$. Since $\alpha > 2(n-1)$, \cref{smallestcycle} 
implies that the girth $\gamma(G)\ge 7$, which means both $X$ and $W$ are non-empty.
We now break the proof up into three cases: 

\

\noindent{\bf Case 1: There exists $x \in X$ with $deg_H(x) \geq 3$}\\
As $x \not\in W$ there is an edge $d\in H$ incident to $x$ that is bought by a vertex outside $C$. 
Since $d$ is in $H$ it is not a cut-edge. Let $D$ be a minimum length cycle $D$ containing $d$. Thus $D$ is a directed 
min-cycle by \cref{mincycleedges} and~\cref{directedcycles}. 

Since $D$ is directed it contains a vertex $w\in W$ which buys an edge of $D\setminus C$. 
Take $w$ to be the last vertex of $C$ before $x$ in $D$.
Let $\overrightarrow{D}_{wx}$ be the clockwise path in $D$ from $w$ to $x$ and let
$\overrightarrow{D}_{wx}$ be the anticlockwise path.
Next observe that the clockwise path $P$ from $w$ to $x$ in $C$ is shorter than the anticlockwise part.
Thus, because $C$ is a min-cycle, $P$ is a shortest path from $w$ to $x$.
But then $|P\cup \overrightarrow{D}_{wx}| \le |D|$. Furthermore, $P\cup \overrightarrow{D}_{wx}$ is a cycle containing $d$.
So $P\cup \overrightarrow{D}_{wx}$ is also a minimum length length cycle containing $d$.
But then it must be a directed cycle; this contradicts the fact that $P$ and $\overrightarrow{D}_{wx}$ are both paths
directed from $w$ to $x$.

\

\noindent{\bf Case 2: $deg_H(v_{\lfloor\frac{k_{\max}}{2}\rfloor+1}) \geq 3$}\\
Note that $v_{\lfloor\frac{k_{\max}}{2}\rfloor+1} \not\in W$. So there is some vertex $u\in H\setminus C$ which buys 
an edge $d = uv_{\lfloor\frac{k_{\max}}{2}\rfloor+1}$. By \cref{mincycleedges} and \cref{directedcycles}, $d$ is in a directed 
min cycle $D \not= C$. Therefore, some vertex $w\in W$ must buy an edge $f \in D\setminus C$.
If $w\neq v_0$ then exactly the same argument as in Case 1 can be applied. Thus, the only possibility remaining is $w=v_0$. 
However, by~\cref{obs:trees} applied to $C$, the shortest path tree $T_{v_{\lfloor\frac{k_{\max}}{2}\rfloor+1}}$ rooted 
at $v_{\lfloor\frac{k_{\max}}{2}\rfloor+1}$ need not contain the edge $e_0$.
Similarly, as $D$ has length at most that of $C$, the shortest path tree $T_{v_{\lfloor\frac{k_{\max}}{2}\rfloor+1}}$ 
need not contain the edge $f$ either.

Now apply Strategy II. Noting that $T_{v_{\lfloor\frac{k_{\max}}{2}\rfloor+1}}(f) = \emptyset$, 
we have that (\ref{eq:strat2}) gives
\begin{equation}\label{eq:bi-2-a}
D(v_{\lfloor\frac{k_{\max}}{2}\rfloor+1})- D(v_0) +n-1  -\alpha \ge 0
\end{equation}

On the other hand, suppose we apply Strategy I with $v_{\lfloor\frac{k_{\max}}{2}\rfloor+1}$ selling $e_{\lfloor\frac{k_{\max}}{2}\rfloor+1}$ 
and buying $v_{\lfloor\frac{k_{\max}}{2}\rfloor+1}v_0$. By (\ref{eq:strat1}), we have
\begin{equation}\label{eq:bi-2-b} 
D(v_0) - D(v_{\lfloor\frac{k_{\max}}{2}\rfloor+1}) +n-1   \ge 0
\end{equation}
Together, (\ref{eq:bi-2-a}) and (\ref{eq:bi-2-b}) imply $\alpha \leq 2(n-1)$, a contradiction.

\

\noindent{\bf Case 3: Else}\\
So $deg_H(v) =2$ for every vertex $v \in X\cup v_{\lfloor\frac{k_{\max}}{2}\rfloor+1}$.
In particular, we have $deg_H(v_{\lfloor\frac{k_{\max}}{2}\rfloor}) = deg_H(v_{\lfloor\frac{k_{\max}}{2}\rfloor+1}) = 2$.
We now consider four strategy changes.

\

\noindent (i) $v_{\lfloor\frac{k_{\max}}{2}\rfloor-1}$ sells $e_{\lfloor\frac{k_{\max}}{2}\rfloor-1}$ and 
buys $v_{\lfloor\frac{k_{\max}}{2}\rfloor-1}v_{\lfloor\frac{k_{\max}}{2}\rfloor+1}$. 

As $deg_H(v_{\lfloor\frac{k_{\max}}{2}\rfloor})=2$, the only vertices now further from $v_{\lfloor\frac{k_{\max}}{2}\rfloor-1}$
are those in $S_H(v_{\lfloor\frac{k_{\max}}{2}\rfloor})$. Their distances have increased by exactly $1$.
On the other hand, this strategy change decreases the distance from $v_{\lfloor\frac{k_{\max}}{2}\rfloor-1}$ 
to $S_H(v_{\lfloor\frac{k_{\max}}{2}\rfloor}+1)$ by $1$.
Therefore, the Nash equilibrium conditions imply
\begin{equation}\label{eq:bi-3-a} 
 |S_H(v_{\lfloor\frac{k_{\max}}{2}\rfloor}+1)| \le |S_H(v_{\lfloor\frac{k_{\max}}{2}\rfloor})|
\end{equation}

\

\noindent (ii) $v_{\lfloor\frac{k_{\max}}{2}\rfloor}$ sells $e_{\lfloor\frac{k_{\max}}{2}\rfloor}$ and 
buys $v_{\lfloor\frac{k_{\max}}{2}\rfloor}v_{\lfloor\frac{k_{\max}}{2}\rfloor+3}$. 

As $deg_H(v_{\lfloor\frac{k_{\max}}{2}\rfloor}+1)=2$, the only vertices now further from $v_{\lfloor\frac{k_{\max}}{2}\rfloor}$
are those in $S_H(v_{\lfloor\frac{k_{\max}}{2}\rfloor}+1)$. Their distances have increased by exactly $2$.
On the other hand, the vertices in $T_{v_{\lfloor\frac{k_{\max}}{2}\rfloor}}(v_0)$ are now closer to $v_{\lfloor\frac{k_{\max}}{2}\rfloor}$. 
If $k_{\max}$ is odd then they are exactly $1$ closer and if $k_{\max}$ is even then they are exactly $2$ closer. Therefore, the Nash 
equilibrium conditions imply
\begin{alignat}{2}
|T_{v_{\lfloor\frac{k_{\max}}{2}\rfloor}}(v_0)| &\le 2\cdot |S_H(v_{\lfloor\frac{k_{\max}}{2}\rfloor+1})|   
		&\qquad \text{if } k_{\max} \text{ odd} \label{eq:bi-3-bodd} \\
|T_{v_{\lfloor\frac{k_{\max}}{2}\rfloor}}(v_0)| &\le |S_H(v_{\lfloor\frac{k_{\max}}{2}\rfloor+1})|   
		 &\qquad \text{if } k_{\max}  \text{ even}  \label{eq:bi-3-beven}
\end{alignat}

\

\noindent (iii) $v_{\lceil\frac{k_{\max}}{2}\rceil}$ sells $e_{\lceil\frac{k_{\max}}{2}\rceil}$ and buys $v_{\lceil\frac{k_{\max}}{2}\rceil}v_0$.

This is an instance of Strategy~I with a slight twist for odd $k$. First note that 
$T_{v_0}(v_{\lceil\frac{k_{\max}}{2}\rceil}) = S_H(v_{\lceil\frac{k_{\max}}{2}\rceil})$. 
For even $k$ we then obtain the following bound from~(\ref{eq:strat1}).
$$0 \leq D(v_0) +n-1 - D(v_{\lceil\frac{k_{\max}}{2}\rceil}) -  \left(\left\lfloor\frac{k_{\max}}{2}\right\rfloor+1\right)\cdot|S_H(v_{\lceil\frac{k_{\max}}{2}\rceil})|$$ 
However, we can improve upon this for odd $k$. Note that $S_H(v_{\lfloor\frac{k_{\max}}{2}\rfloor})$ is $(\lfloor\frac{k_{\max}}{2}\rfloor-1)$ 
closer to $v_{\lfloor\frac{k_{\max}}{2}\rfloor}$ than $v_0$ was before the switch. Thus for odd $k$ we use the bound
$$0 \leq D(v_0) +n-1 - D(v_{\lceil\frac{k_{\max}}{2}\rceil}) -  \left(\left\lfloor\frac{k_{\max}}{2}\right\rfloor+1\right)\cdot|S_{H}(v_{\lceil\frac{k_{\max}}{2}\rceil})| 
- \left\lfloor\frac{k_{\max}}{2}\right\rfloor \cdot|S_H(v_{\lfloor\frac{k_{\max}}{2}\rfloor})| $$

These two separate inequalities for odd and even $k$ can be turned into the following bound for all $k$ using the inequalities 
(\ref{eq:bi-3-a}), (\ref{eq:bi-3-bodd}) and (\ref{eq:bi-3-beven}).
\begin{equation}\label{eq:bi-3-c} 
	0 \le D(v_0) - D(v_{\lceil\frac{k_{\max}}{2}\rceil})  + n-1- \left\lfloor\frac{k_{\max}}{2}\right\rfloor\cdot |T_{v_{\lceil\frac{k_{\max}}{2}\rceil}}(v_0)| 
\end{equation}

\

\noindent (iv) $v_0$ sells $e_0$ and $a$ and buys $v_0v_{\lceil\frac{k_{\max}}{2}\rceil}$. 

This is just a straightforward application of Strategy II. Here~(\ref{eq:strat3}) yields
\begin{alignat}{1}\label{eq:bi-3-e}
0 &\le D(v_{\lceil\frac{k_{\max}}{2}\rceil}) -D(v_0) + n-1 + \left(k_{\max} - \left\lfloor\frac{k_{\max}}{2} \right\rfloor -3 \right)\cdot |T_{v_{\lceil\frac{k_{\max}}{2}\rceil}}(v_0)| -\alpha\nonumber \\
&\leq D(v_{\lceil\frac{k_{\max}}{2}\rceil}) -D(v_0) + n-1 + \left\lfloor\frac{k_{\max}}{2} \right\rfloor\cdot |T_{v_{\lceil\frac{k_{\max}}{2}\rceil}}(v_0)|  -\alpha
\end{alignat}

Finally, summing (\ref{eq:bi-3-c}) and (\ref{eq:bi-3-e})  gives $0 \le 2(n-1) - \alpha$.
But this contradicts the assumption that $2(n-1) < \alpha$.
This completes the proof; there exist two vertices $v_1,v_2\in C$ with $d(v,u)\geq \frac{k_{\max}}{3}$ 
which buy edges $f, g\not\in C$ respectively.
\end{proof}

\section{The Tree Conjecture holds for $\alpha>3(n-1)$}

In this section, we put all the pieces together to obtain the following result.

\begin{theorem}
For $\alpha > 3(n-1)$, any Nash equilibrium $G$ is a tree.
\end{theorem}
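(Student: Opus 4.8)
The plan is to argue by contradiction. Suppose some Nash equilibrium $G$ is not a tree. Since $G$ is connected, it contains a cycle, hence a biconnected component $H$ of cardinality at least two, which by \cref{obs:biconn} contains a min-cycle; fix a maximum-length min-cycle $C\subseteq H$ and set $k:=k_{\max}=|C|$. As $\alpha>3(n-1)>2(n-1)$, \cref{directedcycles} applies, so $C$ is directed and we write $C=v_0,v_1,\dots,v_k=v_0$ with $v_i$ buying $e_i=v_iv_{i+1}$; moreover \cref{smallestcycle} gives $\gamma(G)\ge \tfrac{2\alpha}{n-1}+2>8$, hence $k\ge\gamma(G)\ge 9$. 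Now apply the Key Lemma (\cref{multipleoutedge}) to $C$: there are vertices $u,v\in C$ with $d:=d(u,v)\ge \tfrac{k}{3}$ such that $u$ buys an edge outside $C$ and $v$ buys an edge outside $C$.

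The core idea is to pit two strategy changes against one another so that the connection-cost terms $D(u)$ and $D(v)$ cancel. Let $A$ be the shorter $u$--$v$ arc of $C$, so $\tfrac{k}{3}\le|A|=d\le\tfrac{k}{2}$, and let $\bar A$ be the complementary arc, of length $k-d\le\tfrac{2k}{3}$. Exactly one endpoint of $A$ --- call it $p$, and call the other endpoint $q$ --- has its bought cycle-edge lying along $A$ and pointing towards the other endpoint; this is because an arc leaves each of its two endpoints either via the edge that endpoint bought or via the edge its cycle-predecessor bought, and these two possibilities cannot both occur at the two endpoints simultaneously. Because $C$ is a min-cycle and $d\le\tfrac{k}{2}$, the shortest path from $q$ to the neighbour of $p$ on $A$ runs entirely along $A$ and avoids $p$, so the precondition $w\notin T_q(p)$ is met for both \cref{cl:strategy-1} and \cref{cl:strategy-2} applied to $p$. \emph{Forward move:} by \cref{cl:strategy-2}, $p$ selling its cycle-edge on $A$ together with its edge outside $C$ and buying $pq$ yields, via~(\ref{eq:strat3}),
\begin{equation*}
0 \ \le\ D(q) - D(p) + (n-1) + (k-d-3)\cdot |T_q(p)| - \alpha ,
\end{equation*}
while by \cref{cl:strategy-1}, $p$ selling just its cycle-edge on $A$ and buying $pq$ yields
\begin{equation*}
0 \ \le\ D(q) - D(p) + (n-1) - (d+1)\cdot |T_q(p)| .
\end{equation*}

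To eliminate $D(q)-D(p)$ we need a reverse move from $q$, and here I would split on how large $d$ is. If $d\ge\lfloor k/2\rfloor$ (the two Key-Lemma vertices are essentially antipodal on $C$), the analogous preconditions hold for $q$, and \cref{cl:strategy-1} applied to $q$ (selling $e_q$, buying $qp$) gives $0\le D(p)-D(q)+(n-1)-(d+1)|T_p(q)|$; combining this with the forward move and substituting for $D(q)-D(p)$ via the second display gives $\alpha\le 2(n-1)+(k-2d-4)|T_q(p)|\le 2(n-1)$ since $d\ge\lfloor k/2\rfloor$ forces $k-2d-4<0$, already contradicting $\alpha>3(n-1)$. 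The substantive case is $\tfrac{k}{3}\le d\le\lfloor k/2\rfloor-1$: now the naive reverse move from $q$ is blocked because $q$'s successor is forced into $T_p(q)$. In that regime I would use a refined reverse move --- $q$ selling $e_q$ (and possibly its outside edge) and buying a suitable vertex of $\bar A$, or exchanging with its own antipode --- and charge the resulting distance increases exactly as in Case~3 of the proof of \cref{multipleoutedge}, using that the sets $S_H(\cdot)$ are pairwise disjoint (\cref{lem:SH}) together with \cref{ceilingfloorequal}. The bound $d\ge\tfrac{k}{3}$ is precisely what keeps the coefficient $k-d-3\le\tfrac{2}{3}k-3$ of the forward move small enough to be dominated once the reverse inequality is added and the $D$-terms cancel; the bookkeeping is to be arranged so that the surviving subtree-size terms contribute at most one extra $(n-1)$, giving $\alpha\le 3(n-1)$ and the desired contradiction.

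I expect the main obstacle to be exactly the substantive case $\tfrac{k}{3}\le d\le\lfloor k/2\rfloor-1$: producing a reverse strategy change for $q$ that is legal (precondition satisfied, or distance increases chargeable cleanly via \cref{obs:increase} and the min-cycle structure) and whose subtree-size coefficient matches, up to an additive $(n-1)$, the coefficient $k-d-3$ from the forward move. This is where the $\tfrac{k}{3}$ gained in the Key Lemma has to be turned into a genuine cancellation, and it will likely need a short case analysis on the parity of $k$ and on the exact value of $d$ relative to $k/2$, mirroring the computations already carried out in the proof of \cref{multipleoutedge}.
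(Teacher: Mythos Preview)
Your proposal has a genuine gap: it never introduces the third vertex $r_3$ (the deepest vertex of $T_{r_1}(r_2)\cap H$) and never invokes Strategy~III (\cref{cl:strategy-3}), and both are essential to how the paper actually closes the argument. The paper's proof does not simply cancel $D$-terms between the two Key-Lemma vertices; instead it brings in $r_3$, which buys some edge $f_3\in H\setminus T_{r_1}(r_2)$, and runs a six-case analysis keyed on $D(r_1)$ versus $D(x)$, on the depth $d(r_2,r_3)$, on the size $|T_{r_1}(r_2)|$, and on $|T_{r_3}(r_1)|$. Strategy~III is what controls the case where $T_{r_1}(r_2)$ is large but shallow, and the parameter $\gamma=d(r_2,r_3)/|P_1|$ drives the final inequality. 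None of this structure appears in your outline.

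Even your ``easy'' antipodal case is not right as written. Adding your forward Strategy~II inequality to the reverse Strategy~I inequality gives
\[
\alpha \ \le\ 2(n-1) + (k-d-3)\,|T_q(p)| \;-\; (d+1)\,|T_p(q)|,
\]
which involves \emph{two different} subtree sizes, $|T_q(p)|$ and $|T_p(q)|$, that have no a~priori relation. Your displayed bound $\alpha\le 2(n-1)+(k-2d-4)|T_q(p)|$ tacitly identifies them (or comes from an unstated linear combination that still leaves a $|T_p(q)|$ term). For $d=\lfloor k/2\rfloor$ one has $k-d-3=\lceil k/2\rceil-3\ge 2$ when $k\ge 9$, so the positive $|T_q(p)|$ term does not vanish, and you cannot conclude $\alpha\le 2(n-1)$. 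The ``substantive'' case is then only a sketch (``I would use a refined reverse move \ldots\ charge \ldots\ as in Case~3 of \cref{multipleoutedge}''); but the Key-Lemma Case~3 computation relies on the very special degree-two structure of the vertices in $X\cup\{v_{\lfloor k/2\rfloor+1}\}$, which you have no reason to have here. To make the argument go through you will need the $r_3$ device and Strategy~III, or something of comparable strength.
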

\begin{proof}
Let $C$ be a maximum length min-cycle of a biconnected component $H$. By \cref{multipleoutedge}, 
there are two vertices $r_1,r_2\in C$ with $d(r_1,r_2)\geq \frac{k_{\max}}{3}$ 
that buy edges $e_1,e_2\in \{C\}$ $f_1,f_2\in H\setminus \{C\}$, respectively. 
Let $P_1$ be the shorter of the two paths in $C$ between $r_1$ and $r_2$. 
Without loss of generality, let $P_1$ be directed from $r_1$ to $r_2$. Finally, take a vertex $r_3\in H\cap T_{r_1}(r_2)$ that is as deep 
as possible in $T_{r_1}(r_2)$.
It must buy an edge $f_3 \in H\setminus T_{r_1}(r_2)$ because all vertices in $H$ are in directed 
min-cycles by \cref{mincycleedges} and~\cref{directedcycles}.

We now consider six cases for which we propose a set of strategy changes.
None of these strategies decrease the agent's cost only if $\alpha \leq  3(n-1)$.
This fact implies no biconnected component $H$ 
can exist when $\alpha > 3(n-1)$, giving the theorem.

\

\noindent{\bf Case 1: $D(r_1) \geq D(x)$ for some $x \in T_{r_1}(r_2)$}\\ 
In this case, we consider two strategy changes:
\begin{enumerate}
	\item $r_1$ sells $e_1$ and buys $r_1x$
	\item $r_1$ sells $e_1$ and $f_1$ and buys $r_1x$
\end{enumerate}

The first change is an instance of Strategy~I. Thus from~(\ref{eq:strat1}), we have the bound:
\begin{equation}\label{eq:1}
	0 \ \leq\  D(x) - D(r_1)  + n-1 - (|P_1|+d(r_2,x)+1)\cdot |T_{x}(r_1)|
\end{equation}

The second change is an instance of Strategy~II. Thus from~(\ref{eq:strat3}), we have the bound:
\begin{equation}\label{eq:2}
	0 \ \leq\  D(x) - D(r_1)  + n-1 +(k_{\max} -|P_1|-d(r_2,x)-3)\cdot |T_{x}(r_1)| - \alpha
\end{equation}

The linear combination $2\times (\ref{eq:1}) +1\times (\ref{eq:2})$ gives
\begin{align*}
	0 &\leq 3\cdot\big(D(x) - D(r_1)  + n-1 -(|P_1|+d(r_2,x))\cdot |T_{x}(r_1)| \big)  +  (k_{\max}-5)\cdot |T_{x}(r_1)| - \alpha \\
	&\leq 3\cdot\big(n-1 -(|P_1|+d(r_2,x))\cdot |T_{x}(r_1)| \big)  +  k_{\max}\cdot |T_{x}(r_1)| - \alpha \\
	&\leq 3\cdot\left(n-1 - \frac13k_{\max}\cdot |T_{x}(r_1)|\right) +  k_{\max}\cdot |T_{x}(r_1)| - \alpha \\
	&\leq 3(n-1) - \alpha
\end{align*}
Here the second inequality holds as $D(r_1) \ge D(x)$.
It follows that for $\alpha > 3(n-1)$, we may now assume $D(r_1) < D(x)$ for every $x \in T_{r_1}(r_2)$.

\

\noindent{\bf Case 2: $d(r_2,r_3)  \geq \frac13k_{\max}$}\\
In this case, we consider three strategy changes:
\begin{enumerate}
	\item $r_1$ sells $e_1$ and buys $r_1r_3$
	\item $r_1$ sells $e_1$ and $f_1$ and buys $r_1r_3$
	\item $r_3$ sells $f_3$  and buys $r_1r_3$
\end{enumerate}

The first change is an instance Strategy~I. Thus from~(\ref{eq:strat1}), we have the bound:
\begin{equation}\label{eq:3}
	0 \ \leq\  D(r_3) - D(r_1)  + n-1 -(|P_1|+d(r_3,r_2)+1) \cdot |T_{r_3}(r_1)|
\end{equation}

The second change is an instance of Strategy~II. Thus from~(\ref{eq:strat3}), we have the bound:
\begin{equation}\label{eq:4}
	0 \ \leq\  D(r_3) - D(r_1)  + n-1 +(k_{\max} -|P_1|-d(r_3,r_2)-3) \cdot |T_{r_3}(r_1)| - \alpha
\end{equation}

The third change  is an instance Strategy~I. Thus from~(\ref{eq:strat1}), we have the bound:
 \begin{alignat}{1}\label{eq:5}
 	0 & \leq\  D(r_1) - D(r_3) + n-1 -(|P_1|+d(r_3,r_2)+1) \cdot |T_{r_1}(r_3)| \nonumber\\
 	& \leq\ D(r_1) - D(r_3) + n-1 
 \end{alignat}
The linear combination $\frac12 \times (\ref{eq:3})+ 1\times (\ref{eq:4})+ \frac32\times (\ref{eq:5})$, all $D$ terms cancel and we're left with
\begin{align*}
	0 &\leq \frac32\cdot \big(n-1 + n-1 -(|P_1|+d(r_3,r_2)) \cdot |T_{r_3}(r_1)|\big) +  (k_{\max}-3.5)\cdot |T_{r_3}(r_1)| - \alpha \\
	&= 3(n-1)- \frac32\cdot  (|P_1|+d(r_3,r_2)) \cdot |T_{r_3}(r_1)|  +  k_{\max}\cdot |T_{r_3}(r_1)| - \alpha \\
	&\le 3(n-1)- k_{\max} \cdot |T_{r_3}(r_1)|  +  k_{\max}\cdot |T_{r_3}(r_1)| - \alpha \\
	&\leq 3(n-1) - \alpha
\end{align*}
Here the third inequality holds as $d(r_2,r_3)  \geq \frac13k_{\max}$.
It follows that for $\alpha > 3(n-1)$, we may now assume that $d(r_2,r_3)  < \frac13k_{\max}$.

\

\noindent{\bf Case 3: $|T_{r_1}(r_2)| \leq  \frac{n-1}{d(r_3,r_2)}$}\\
In this case, we consider one new strategy change:
\begin{enumerate}
	\item $r_2$ sells $e_2$ and $f_2$ and buys $r_1r_2$
\end{enumerate}

This strategy change is an instance of Strategy~III. Thus from~(\ref{eq:strat4}), we have the bound:
\begin{align*}\label{eq:6}
0 &\leq D(r_1) - D(r_2)  + n-1  +  2d(r_3,r_2)\cdot|T_{r_1}(r_2)| - \alpha\\
&< n-1  +  2d(r_3,r_2)\cdot|T_{r_1}(r_2)| - \alpha\\
&\leq n-1  +  2(n-1)- \alpha\\
	&\leq 3(n-1) - \alpha 
\end{align*}
Here the strict inequality holds as $D(r_1)< D(r_2)$; the second inequality holds by 
the assumption $|T_{r_1}(r_2)| \leq  \frac{n-1}{d(r_3,r_2)}$.
Thus, for $\alpha > 3(n-1)$, we may now assume that $|T_{r_1}(r_2)| >  \frac{n-1}{d(r_3,r_2)}$.

\

\noindent{\bf Case 4: $D(r_1)- D(r_3) \leq -(2-\frac{|P_1|}{d(r_3,r_2)})\cdot (n-1)$}\\
In this case, we reconsider one strategy change:
\begin{enumerate}
	\item $r_3$ sells $f_3$  and buys $r_1r_3$
\end{enumerate}
Recall $d(r_2,r_3)  < \frac13k_{\max}$.
Then because $|P_1| \geq \frac13k_{\max}$, $u$ 
is closer to every vertex of $ |T_{r_1}(r_2)|$ than $r_1$ by at least $|P_1|-d(r_3,r_2)$.
Thus, the equilibrium conditions imply that
\begin{align*}
0 &\leq  D(r_1) - D(r_3)  + n-1  -(|P_1|-d(r_3,r_2)) \cdot |T_{r_1}(r_2)| \\
&\leq -(2-\frac{|P_1|}{d(r_3,r_2)})\cdot (n-1) + n-1  -(|P_1|-d(r_3,r_2)) \cdot |T_{r_1}(r_2)| \\
&< -(2-\frac{|P_1|}{d(r_3,r_2)})\cdot(n-1) + n-1  -(|P_1|-d(r_3,r_2)) \cdot \frac{n-1}{d(r_3,r_2)}\\
	&=0
\end{align*}
Thus, for $\alpha > 3(n-1)$, we may now assume 
$D(r_1)- D(r_3) > -\left(2-\frac{|P_1|}{d(r_3,r_2)}\right)\cdot(n-1)$ or, equivalently, $D(r_3) - D(r_1) < \left(2-\frac{|P_1|}{d(r_3,r_2)}\right)\cdot(n-1)$.

\

\noindent{\bf Case 5: $ |T_{r_3}(r_1)| \geq \frac{\left(3-\frac{|P_1|}{d(r_3,r_2)}\right)\cdot (n-1)}{(|P_1|+d(r_3,r_2))}$}\\
In this case, we reconsider one strategy change:
\begin{enumerate}
	\item $r_1$ sells $e_1$ and buys $r_1r_3$
\end{enumerate}
This is an instance Strategy~I. Thus from~(\ref{eq:strat1}), we have the bound:
\begin{align*}
0 &\leq  D(r_3) - D(r_1) + n-1  - \big(|P_1|+d(r_3,r_2)+1\big) \cdot |T_{r_3}(r_1)| \\
&<  \left(2-\frac{|P_1|}{d(r_3,r_2)}\right)\cdot (n-1)  + n-1  - \big(|P_1|+d(r_3,r_2)\big) \cdot |T_{r_3}(r_1)| \\
&\le \left(3-\frac{|P_1|}{d(r_3,r_2)}\right)\cdot (n-1)   -(|P_1|+d(r_3,r_2)) \cdot   \frac{\left(3-\frac{|P_1|}{d(r_3,r_2)}\right)\cdot (n-1)}{(|P_1|+d(r_3,r_2))}\\
	&=0
\end{align*}
Thus, for $\alpha > 3(n-1)$, we may now assume  $ |T_{r_3}(r_1)| < \frac{\left(3-\frac{|P_1|}{d(r_3,r_2)}\right)\cdot (n-1)}{(|P_1|+d(r_3,r_2))}$.

\

\noindent{\bf Case 6: Else.}\\
For this final case, define $\gamma= \frac{d(r_3,r_2)}{|P_1|}$. Since $d(r_2,r_3)  < \frac13k_{\max} \le |P_1|$, it follows that $t \in (0,1)$. 
We now consider one strategy change:
\begin{enumerate}
	\item $r_1$ sells $e_1$ and $f_1$ and buys $r_1r_3$
\end{enumerate}
This is an instance Strategy~II. Thus from~(\ref{eq:strat3}), we have the bound:
\begin{equation}\label{eq:9}
	0 \ \leq\  D(r_3) - D(r_1)  + n-1 +(k_{\max}-(|P_1|+d(r_2,r_3)+3))\cdot |T_{r_3}(r_1)| - \alpha
\end{equation}
Substituting in $D(r_3) - D(r_1) < (2-\frac{|P_1|}{d(r_3,r_2)})\cdot(n-1)$ and $|T_{r_3}(r_1)| < \frac{(3-\frac{|P_1|}{d(r_3,r_2)})\cdot (n-1)}{(|P_1|+d(r_3,r_2))}$ gives
\begin{align*}
	0 &< \left(3-\frac{|P_1|}{ d(r_3,r_2)}\right)\cdot (n-1) +(k_{\max}-(|P_1|+d(r_2,r_3)+3))\cdot  \frac{(3-\frac{|P_1|}{d(r_3,r_2)})\cdot (n-1)}{(|P_1|+d(r_3,r_2))} - \alpha\\	
	&= (k_{\max}-3)\cdot  \frac{(3-\frac{|P_1|}{d(r_3,r_2)})}{(|P_1|+d(r_3,r_2))}\cdot (n-1) - \alpha\\	
	&\leq  \frac{k_{\max}}{|P_1|}\cdot  \frac{3-\frac{1}{\gamma}}{1+\gamma}\cdot (n-1) - \alpha\\	
	& \leq 3(n-1) \cdot  \frac{3-\frac{1}{\gamma}}{1+\gamma}- \alpha\\	
	& \leq 3(n-1) - \alpha
\end{align*}
The third inequality holds as $\frac13k_{\max} \le |P_1|$.
For the last inequality, we claim that $\frac{3-\frac{1}{\gamma}}{1+\gamma} \leq 1$, for any~$\gamma$.
To see this, note that $(\gamma-1)^2\ge 0$. Rearranging gives $\gamma^2+\gamma\ge 3\gamma-1$ and the claim holds.

This completes the case analysis. If  $\alpha > 3(n-1)$ there are no cases where a biconnected component of $G$ can exist. 
Thus any Nash equilibrium $G$ must be a tree for $\alpha > 3(n-1)$.
\end{proof}

\section{Conclusion}

In this paper, we have shown that the revised tree conjecture holds for $\alpha \ge 3(n-1)$.
Moreover, we have confirmed that min-cycles are a powerful tool in tackling the conjecture. 
Specifically, examining the strategic options of vertices on maximum length 
min-cycles is a promising technique for contradicting the existence of cycles in Nash equilibria. 
%We remark that there is a strong case for continued focus on maximum length 
%min-cycles in pursuit of settling the revised tree conjecture. 
This is particularly the case for the range $\alpha > 2(n-1)$, where we know that all min-cycles 
must be directed. Indeed, all the results presented in this paper, except for the main theorem, 
hold for the range $\alpha > 2(n-1)$. This suggests improved bounds can be obtained 
using these methods.

\bibliographystyle{plain}
\bibliography{ncbib}

\begin{thebibliography}{10}

\bibitem{AEE14}
S.~Albers, S.~Eilts, E.~Even-Dar, Y.~Mansour, and L.~Roditty.
\newblock On {N}ash equilibria for a network creation game.
\newblock {\em ACM Transactions on Economics and Computation}, 2(1), 2014.

\bibitem{ADH13}
N.~Alon, E.~Demaine, M.~Hajiaghayi, and T.~Leighton.
\newblock Basic network creation games.
\newblock {\em SIAM Journal on Discrete Mathematics}, 27(2):656--668, 2013.

\bibitem{AM17}
C.~Alvarez and A.~Messegue.
\newblock Network creation games: Structure vs anarchy.
\newblock arXiv:1706.09132.

\bibitem{BGL16}
D.~Bilo, L.~Guala, S.~Leucci, and G.~Proietti.
\newblock Locality-based network creation games.
\newblock {\em ACM Transactions on Parallel Computing}, pages 210--223, 2014.

\bibitem{BGL14}
D.~Bilo, L.~Guala, S.~Leucci, and G.~Proietti.
\newblock Network creation games with traceroute-based strategies.
\newblock In {\em Proceedings of 21st Colloquium on Structural Information and
  Communication Complexity (SIROCCO)}, pages 210--223, 2014.

\bibitem{BL19}
D.~Bilo and P.~Lenzner.
\newblock On the tree conjecture for the network creation game.
\newblock {\em Theory of Computing Systems}, 64(3):422--443, 2019.

\bibitem{CLM17}
A.~Chauhan, P.~Lenzner, A.~Melnichenko, and L.~Molitor.
\newblock Selfish network creation with non-uniform edge cost.
\newblock In {\em Proceedings of 10th Symposium on Algorithmic Game Theory
  (SAGT)}, pages 160--172, 2017.

\bibitem{CL15}
A.~Cord-Landwehr and P.~Pascal~Lenzner.
\newblock Network creation games: Think global - act local.
\newblock In {\em Proceedings of 40th Symposium on Mathematical Foundations of
  Computer Science (MFCS)}, pages 248--260, 2015.

\bibitem{DHM12}
E.~Demaine, M.~Hajiaghayi, H.~Mahini, and M.~Zadimoghaddam.
\newblock The price of anarchy in cooperative network creation games.
\newblock {\em ACM Transactions on Economics and Computation}, 8(2), 2012.

\bibitem{FLM03}
A.~Fabrikant, A.~Luthra, E.~Maneva, C.~Papadimitriou, and S.~Shenker.
\newblock On a network creation game.
\newblock In {\em Proceedings of 22nd Symposium on Principles of Distributed
  Computing (PODC)}, pages 347--351, 2003.

\bibitem{L14}
P.~Lenzner.
\newblock {\em On Selfish Network Creation}.
\newblock PhD thesis, Humboldt-Universit{\"a}t zu Berlin, 2014.

\bibitem{MMM15}
A.~Mamageishvili, M.~Mihalak, and D.~Muller.
\newblock Tree {N}ash equilibria in the network creation game.
\newblock {\em Internet Mathematics}, 11(4-5):472--486, 2015.

\bibitem{MS13}
M.~Mihalak and J.~Schlegel.
\newblock The price of anarchy in network creation games is (mostly) constant.
\newblock {\em Theoretical Computer Science}, 53(1):53--72, 2013.

\end{thebibliography}

%\begin{thebibliography}{99}
%	\bibitem{1310.8245}
%	Akaki Mamageishvili, Matus Mihalak and Dominik Muller.
%	\newblock Tree Nash Equilibria in the Network Creation Game, 2013;
%	\newblock arXiv:1310.8245.
%	
%	\bibitem{1710.01782}
%	Davide Bil\`o and Pascal Lenzner.
%	\newblock On the Tree Conjecture for the Network Creation Game, 2017;
%	\newblock arXiv:1710.01782.
%	
%	\bibitem{1312.3401}
%	Daniel J. Harvey and David R. Wood.
%	\newblock Parameters Tied to Treewidth, 2013;
%	\newblock arXiv:1312.3401.
%	
%\end{thebibliography}

\end{document}